\documentclass[11pt]{article}
\usepackage[a4paper,margin=1in]{geometry}
\usepackage{amsmath,amssymb,amsthm}
\usepackage{booktabs}
\usepackage{enumitem}
\usepackage[hypertexnames=false]{hyperref}
\usepackage{xspace}
\usepackage{algorithm}
\usepackage{algpseudocode}
\usepackage{tabularx}

\newcommand{\ZKACE}{ZK-ACE\xspace}
\newcommand{\IDcom}{\mathsf{ID_{com}}}
\newcommand{\TxHash}{\mathsf{TxHash}}
\newcommand{\rpcom}{\mathsf{rp\text{\_}com}}
\newcommand{\Adv}{\mathsf{Adv}}
\newcommand{\negl}{\mathsf{negl}}
\newcommand{\REV}{\mathsf{REV}}
\newcommand{\Ctx}{\mathsf{Ctx}}
\newcommand{\Auth}{\mathsf{Auth}}
\newcommand{\target}{\mathsf{target}}
\newcommand{\domain}{\mathsf{domain}}
\newcommand{\nonce}{\mathsf{nonce}}
\newcommand{\salt}{\mathsf{salt}}
\newcommand{\aux}{\mathsf{aux}}
\newcommand{\vk}{\mathsf{vk}}
\newcommand{\PPT}{\mathsf{PPT}}
\newcommand{\QPT}{\mathsf{QPT}}
\newcommand{\DIDP}{DIDP\xspace} 

\newtheorem{definition}{Definition}[section]
\newtheorem{theorem}{Theorem}[section]

\newtheorem{corollary}{Corollary}[section]
\newtheorem{remark}{Remark}[section]
\newtheorem{proposition}{Proposition}[section]

\title{\textbf{ZK-ACE: Identity-Centric Zero-Knowledge Authorization\\for Post-Quantum Blockchain Systems}}
\author{Jian Sheng Wang\\{\normalsize Yeah LLC}}
\date{May 19, 2026}

\begin{document}
\maketitle

\begin{abstract}
Post-quantum signature schemes impose kilobyte-scale on-chain artifacts. Verifying them inside ZK circuits merely relocates the cost via expensive lattice arithmetic in prover circuits.

We present \ZKACE (\textbf{Z}ero-\textbf{K}nowledge \textbf{A}uthorization for \textbf{C}ryptographic \textbf{E}ntities), which replaces transaction-carried signature objects with identity-bound ZK statements. Given a \emph{deterministic identity derivation primitive} (\DIDP) as a black box, the prover demonstrates in zero knowledge that an identity consistent with an on-chain commitment authorized the transaction---no signature object is produced or verified on-chain.

We provide game-based definitions and reduction-based proofs for authorization soundness, replay resistance, substitution resistance, and cross-domain separation, under knowledge soundness, collision resistance, and \DIDP recovery hardness. Structural data accounting shows an order-of-magnitude reduction in per-transaction authorization data versus direct PQC deployment. A reference implementation offers two backends: Circle STARK (341 active rows / 361 AIR constraint expressions, 14.5\,ms prove, 1.1\,ms verify, ${\approx}\,$107\,KB proofs, transparent setup, post-quantum-oriented) and Groth16/BN254 (2,155 R1CS constraints, 37.3\,ms prove, 128-byte proofs). Both are roughly 500--2,300$\times$ smaller than in-circuit PQC signature verification. Under mandatory per-block STARK aggregation, per-transaction consensus-visible data is ${\approx}\,$160 bytes.
\end{abstract}

\paragraph{Keywords.}
Post-Quantum Cryptography, Deterministic Identity, Zero-Knowledge Proofs, Identity Commitment, Blockchain Authorization, Circle STARK, Groth16, Account Abstraction.

\paragraph{Version note.}
This arXiv version consolidates the March--May 2026 revisions: unified replay-circuit benchmarks, dual-backend implementation results, squeeze2 sponge optimization, and batch STARK measurements.

\section{Introduction}
\label{sec:intro}

\subsection{Post-Quantum Signatures and Blockchain Scalability}
\label{sec:intro:pqc}
The standardization of post-quantum cryptography (PQC) by NIST represents a necessary step toward long-term security in the presence of quantum adversaries. Lattice-based signature schemes such as ML-DSA (formerly Dilithium)~\cite{mldsa} and hash-based schemes such as SLH-DSA (formerly SPHINCS$^+$)~\cite{slhdsa} have emerged as leading candidates to replace classical public-key signatures.

While these schemes are well-suited for conventional communication settings, their direct application to blockchain systems exposes a fundamental tension between post-quantum security and scalability. A distinguishing characteristic of lattice-based signatures is their \emph{large signature size}, typically on the order of several kilobytes. In blockchain environments, where transaction validity must be verified by all consensus participants and stored as part of the permanent ledger, such signature sizes translate directly into increased on-chain data growth. ML-DSA signatures at NIST security level~2 are approximately 2.4\,KB, compared to 64 bytes for Ed25519 or 71 bytes for secp256k1 ECDSA---a roughly 30--40$\times$ increase per transaction.

This pressure is further amplified in modern scaling architectures. In rollup-based systems, transaction data is posted as calldata or blob data to the base layer, where bandwidth and availability are explicitly constrained and economically priced~\cite{eip4844}. Large post-quantum signatures therefore reduce effective transaction throughput and increase marginal costs per transaction. Crucially, this scalability pressure arises from the \emph{structure of the authorization artifacts themselves}, rather than from inefficiencies in any particular implementation.

\subsection{Limitations of ZK-Compressed Signature Verification}
\label{sec:intro:zk-sig}
A natural response to the data-expansion problem is to move post-quantum signature verification off-chain and replace it with zero-knowledge proofs. Under this approach, a prover generates a zero-knowledge proof attesting to the validity of a post-quantum signature, and the blockchain verifies only the succinct proof~\cite{circom-ecdsa}.

However, this strategy encounters a critical limitation: verifying lattice-based signatures requires arithmetic over high-dimensional algebraic structures, which translates into a large number of constraints when expressed inside zero-knowledge circuits. Even when technically feasible, embedding lattice verification logic into ZK circuits significantly increases the proving cost and often shifts the scalability burden from on-chain data to off-chain computation and verification~\cite{stark,plonk}.

As a result, ZK-verification of post-quantum signatures risks becoming a performance bottleneck in its own right. Rather than addressing the root cause of the scalability issue, this approach preserves the signature-centric model of authorization and merely relocates its computational cost. This observation motivates a more fundamental question: \emph{should post-quantum signature verification be treated as a mandatory component of on-chain authorization at all?}

\subsection{Key Observation: Authorization versus Signature Objects}
\label{sec:intro:key-obs}
At the consensus layer, blockchains do not inherently require verification of a specific cryptographic signature object. What consensus requires is assurance that a given transaction was \emph{authorized} by the correct entity under the system's rules.

In traditional designs, cryptographic signatures serve as the mechanism by which authorization is expressed and verified. However, signatures are ultimately an \emph{implementation artifact} rather than the authorization semantics themselves. Treating the signature object as the primary unit of verification implicitly couples consensus security to the concrete properties of a particular signature scheme---including its key sizes, signature sizes, and verification costs.

This work adopts a different viewpoint: authorization should be modeled as a \emph{semantic property}---that a specific identity has approved a specific transaction---independent of the concrete cryptographic artifact traditionally used to express that approval. From this perspective, the role of zero-knowledge proofs is not to verify the correctness of a post-quantum signature object, but to prove that the underlying authorization condition holds.

\subsection{Our Setting and Scope}
\label{sec:intro:scope}
We consider a setting in which a \emph{deterministic identity derivation primitive} (\DIDP) is assumed to exist as a cryptographic building block.\footnote{ACE-GF~\cite{wang-acegf-2026} is one concrete instantiation that satisfies the \DIDP interface and security properties defined in Section~\ref{sec:prelim:didp}. Any framework providing the same interface---a deterministic mapping from a high-entropy root to context-specific keys with the properties of determinism, context isolation, and identity-root recovery hardness---may be used in its place.} A \DIDP provides a stable, deterministic mapping from a 256-bit Root Entropy Value ($\REV$) to context-specific cryptographic keys, potentially across multiple cryptographic schemes, including post-quantum constructions. Its security properties---including determinism, context isolation, and resistance to identity-root recovery---are taken as given. In this respect, a \DIDP is treated analogously to other commonly assumed primitives such as verifiable random functions (VRFs), key derivation functions (KDFs), or cryptographic commitment schemes, which are routinely used as building blocks without re-proving their security in every application.

The present work does not propose, analyze, or modify any specific \DIDP construction. Our contribution begins at the layer above the \DIDP: given a deterministic identity derivation primitive, how should transaction authorization be represented, proven, and verified in a post-quantum-ready blockchain system?

\paragraph{Terminology: post-quantum-ready vs.\ post-quantum-secure.}
This paper uses two related but distinct terms. \emph{Post-quantum-secure} refers to a concrete instantiation whose security holds against quantum adversaries under standard assumptions (e.g., hash preimage resistance under Grover-style quadratic speedup, and collision resistance against quantum algorithms; note that collision resistance is more aggressively degraded---roughly $2^{n/3}$ queries under the BHT algorithm~\cite{brassard-hoyer-tapp-1997}---so the adequate-output-length requirement is stronger for collision than for preimage). \emph{Post-quantum-ready} refers to an architectural property: the system is designed so that a migration to post-quantum-secure components requires no structural changes to the protocol (e.g., no identity rotation, no circuit redesign). \ZKACE is \emph{post-quantum-ready} by design; its Circle STARK instantiation is post-quantum-oriented and should be regarded as post-quantum-secure only under adequate hash/output-length parameters and the proof-system assumptions stated in Section~\ref{sec:prelim:zk}.

\subsection{Contributions}
\label{sec:intro:contrib}
This paper makes the following contributions:
\begin{enumerate}[nosep,leftmargin=1.5em]
  \item We introduce \ZKACE, a zero-knowledge authorization layer built on top of deterministic identity derivation, in which transaction authorization is proven via zero-knowledge proofs rather than explicit post-quantum signature objects.
  \item We propose an on-chain data model consisting of a compact identity commitment and per-transaction zero-knowledge authorization proofs, replacing kilobyte-scale signature artifacts with constant-size verification material.
  \item We formally specify the zero-knowledge circuit statement underlying \ZKACE, including its public inputs, private witnesses, and five core constraints.
  \item We define game-based security notions---authorization soundness, replay resistance, substitution resistance, and cross-domain separation---and provide reduction-based security arguments under standard assumptions (knowledge soundness, collision resistance, \DIDP identity-root recovery hardness).
  \item We describe two replay-protection mechanisms (nonce-registry and nullifier-set) with a formal analysis of their respective trade-offs.
  \item We present batching and recursion strategies for high-throughput systems, while treating constant-verification recursive aggregation as future work beyond the current benchmarked implementation.
  \item We provide a structural, protocol-level accounting of on-chain authorization data requirements, demonstrating an order-of-magnitude reduction relative to signature-based post-quantum authorization.
  \item We report a reference implementation with two pluggable backends---Circle STARK (Stwo, transparent and hash-based, with post-quantum security conditional on the instantiated parameters) and Groth16/BN254 (compact classical proofs)---with Criterion.rs benchmarks for the unified replay-circuit path. The STARK backend achieves 14.5\,ms proving and 1.1\,ms verification (341 active rows / 361 AIR constraint expressions, 11 permutations); the Groth16 backend achieves 37.3\,ms proving and 1.6\,ms verification over 2,155 R1CS constraints with 128-byte compressed proofs. Under deployment models with mandatory per-block STARK aggregation and on-chain aggregate verification, per-transaction consensus-visible data is ${\approx}\,$160 bytes (public inputs only) plus amortized aggregation proof overhead shared across all transactions in the block. Under a builder-off-path model, the per-transaction consensus-visible footprint is likewise ${\approx}\,$160 bytes, with aggregate proof verification handled by the block builder rather than the chain consensus. We provide a quantitative comparative analysis showing a roughly 500--2,300$\times$ constraint reduction relative to in-circuit post-quantum signature verification.
\end{enumerate}

\subsection{Design Goals}
\label{sec:intro:goals}
The authorization model should satisfy the following properties:
\begin{itemize}[nosep,leftmargin=1.5em]
  \item \textbf{Security.} Authorization must be cryptographically sound and resistant to classical and quantum adversaries within the assumed threat model.
  \item \textbf{Verifiability.} Transaction validity must be efficiently verifiable by consensus participants using only public information.
  \item \textbf{Scalability.} On-chain authorization data and verification cost should remain constant with respect to the size of post-quantum signature objects.
  \item \textbf{Migratability.} Existing identities should be able to transition to the new authorization model without disrupting established recovery or identity assumptions. Identity commitments should be proof-system-agnostic so that verifier and prover upgrades do not require identity rotation.
  \item \textbf{Aggregatability.} The authorization mechanism should admit batching today and remain compatible with recursive composition for high-throughput systems.
\end{itemize}

\subsection{Paper Organization}
Section~\ref{sec:related} surveys related work.
Section~\ref{sec:prelim} introduces notation, the assumed \DIDP interface, and the threat model.
Section~\ref{sec:system} describes the system model and transaction flow.
Section~\ref{sec:commitment} defines the identity commitment and binding rules.
Section~\ref{sec:auth-semantics} clarifies the authorization semantics.
Section~\ref{sec:circuit} specifies the ZK circuit.
Section~\ref{sec:security} presents formal security definitions and proofs.
Section~\ref{sec:deployment} discusses on-chain verification and deployment.
Section~\ref{sec:accounting} provides the structural data accounting.
Section~\ref{sec:extensions} describes scalability extensions.
Section~\ref{sec:implementation} reports a reference implementation and benchmarks.
Section~\ref{sec:limitations} addresses limitations, and
Section~\ref{sec:conclusion} concludes.

\section{Related Work}
\label{sec:related}

\paragraph{Post-quantum signatures on blockchains.}
Several efforts have investigated the direct integration of lattice-based or hash-based signatures into blockchain transaction validation~\cite{mldsa,slhdsa,falcon}. The Ethereum community has explored PQC migration paths within its long-term roadmap. The primary obstacle recognized in these efforts is the significant per-transaction data overhead. Our work addresses this obstacle at the protocol architecture level rather than attempting to optimize signature parameters.

\paragraph{ZK-compressed signature verification.}
An alternative line of research embeds classical or post-quantum signature verification inside zero-knowledge circuits. Notable examples include Circom-based ECDSA verification~\cite{circom-ecdsa} and explorations of lattice signature verification within SNARK circuits. While technically feasible for classical signatures whose verification is algebraically lightweight, extending this approach to lattice-based schemes significantly inflates the constraint count and prover cost. \ZKACE avoids this entirely by not placing any signature verification logic inside the circuit.

\paragraph{Transparent and hash-based proof systems.}
STARKs~\cite{stark} and FRI-based interactive oracle proofs provide transparent, plausibly post-quantum-secure proof generation without a trusted setup. Recent work by Chiesa, Fenzi, and Weissenberg~\cite{chiesa-zkiopp-2026} constructs zero-knowledge IOPPs for constrained interleaved codes with near-optimal $(1+o(1))$ soundness-to-proof-length overhead, further advancing the efficiency of hash-based proof systems. These developments improve the \emph{proof system layer} and are complementary to \ZKACE, which operates at the \emph{authorization model layer}. Even with an optimal transparent proof system, embedding lattice-based signature verification into a ZK circuit still requires expressing high-dimensional polynomial arithmetic (NTTs, non-native modular reductions) as arithmetic constraints---yielding circuits on the order of millions of constraints. \ZKACE sidesteps this bottleneck entirely: its circuit requires only a small number of ZK-friendly hash evaluations (2,155 R1CS constraints in the Groth16 backend, 341 active rows / 361 AIR constraint expressions in the Circle STARK backend; see Section~\ref{sec:circuit:constraints}), and is compatible with any backend proof system, including hash-based and transparent constructions. The reference implementation includes a Circle STARK backend built on Stwo~\cite{stwo}, demonstrating that the \ZKACE circuit can be instantiated with a transparent hash-based proof system with competitive performance, subject to the concrete security margins discussed in Section~\ref{sec:limitations}.

\paragraph{Commitment and nullifier systems.}
Privacy-focused protocols such as Zcash Sapling~\cite{zcash} and Tornado Cash~\cite{tornado} employ commitment-based state models with nullifier-set replay protection. Semaphore~\cite{semaphore} uses identity commitments for anonymous signaling. \ZKACE draws on the commitment/nullifier design vocabulary but applies it to a distinct problem: \emph{authorization} of transactions against a committed deterministic identity, rather than private value transfer or anonymous membership proofs.

\paragraph{Account abstraction and modular validation.}
ERC-4337~\cite{erc4337} introduces account abstraction for Ethereum, allowing accounts to define custom validation logic via validator modules. \ZKACE can be instantiated as such a module, replacing signature-object checks with zero-knowledge proof verification. Unlike generic account-abstraction validators, \ZKACE provides a complete authorization framework with formal security guarantees.

\paragraph{Identity-based cryptography.}
Identity-based encryption~\cite{boneh-franklin} and identity-based signatures~\cite{ib-sig} bind cryptographic capabilities to identity strings, typically requiring a trusted key-generation center. \ZKACE does not assume any trusted third party; identity is self-sovereign and deterministically derived from user-controlled root material via a \DIDP.

\section{Preliminaries and Assumptions}
\label{sec:prelim}

\subsection{Notation}
\label{sec:prelim:notation}
We use the following notation throughout.
\begin{itemize}[nosep,leftmargin=1.5em]
  \item $\lambda$: security parameter.
  \item $H(\cdot)$: a cryptographic hash function. When used inside zero-knowledge circuits, $H$ is instantiated with a ZK-friendly hash (e.g., Poseidon~\cite{poseidon}); outside circuits it may be a conventional hash.
  \item $\|$: concatenation with unambiguous parsing (e.g., length-prefixed encoding).
  \item $\REV$: 256-bit Root Entropy Value serving as the identity root, derived via the \DIDP. Ephemeral; never persistently stored.
  \item $\Ctx$: derivation context tuple $(\mathsf{AlgID}, \mathsf{Domain}, \mathsf{Index})$ used by the \DIDP to derive context-specific keys from $\REV$.
  \item $\IDcom$: on-chain identity commitment.
  \item $\TxHash$: hash of all authorization-relevant transaction fields.
  \item $\domain$: chain or application domain separator.
  \item $\target$: target-binding digest (address, call-data hash, or account context).
  \item $\rpcom$: replay-prevention commitment (nonce commitment or nullifier).
  \item $\PPT$: probabilistic polynomial-time.
  \item $\negl(\lambda)$: a function negligible in $\lambda$.
  \item $x \xleftarrow{\$} S$: $x$ sampled uniformly at random from set $S$.
\end{itemize}

\subsection{Assumed \DIDP Interface (Black-Box)}
\label{sec:prelim:didp}
We assume the existence of a \emph{deterministic identity derivation primitive} (\DIDP), treated as a black-box building block. Rather than fixing a specific construction, we formalize only the interface and properties required by the zero-knowledge authorization layer. Any concrete framework satisfying this interface may be used; ACE-GF~\cite{wang-acegf-2026} is one such instantiation.

At the core of the \DIDP is a high-entropy identity root called the \emph{Root Entropy Value} ($\REV$), a 256-bit value that uniquely defines the cryptographic identity of an entity and serves as the sole source of entropy for all derived keys. The $\REV$ exists only ephemerally in memory during execution and is never persistently stored or exported.

The \DIDP provides two principal operations:
\begin{itemize}[nosep,leftmargin=1.5em]
  \item \textbf{Identity reconstruction}:
    $\REV \leftarrow \mathsf{Unseal}(\mathsf{params}, \mathsf{SA}, \mathsf{Cred})$,
    where $\mathsf{SA}$ is a sealed artifact (an encrypted encoding of $\REV$) and $\mathsf{Cred}$ is an authorization credential. Returns $\REV$ if authentication succeeds, or $\perp$ otherwise. The credential controls \emph{access} to the identity root but does not influence key derivation.
  \item \textbf{Context-specific key derivation}:
    $\mathsf{Key} \leftarrow \mathsf{Derive}(\REV, \Ctx)$,
    where $\Ctx = (\mathsf{AlgID}, \mathsf{Domain}, \mathsf{Index})$ is a context descriptor encoding the target cryptographic algorithm (e.g., Ed25519, Secp256k1, ML-DSA), the application domain (e.g., signing, encryption), and a derivation index. The context tuple enforces cryptographic isolation: keys derived under distinct context tuples are computationally independent.
\end{itemize}
This two-operation interface cleanly separates the \emph{authorization pipeline} (controlling access to $\REV$ via sealed artifacts and credentials) from the \emph{identity pipeline} (deterministically deriving keys from $\REV$ using context-isolated derivation). The zero-knowledge authorization layer interacts only with the identity pipeline: it assumes the prover has obtained $\REV$ (via Unseal) and uses $\REV$ together with a context tuple $\Ctx$ to derive the relevant cryptographic material.

\subsection{Security Properties Assumed from the \DIDP}
\label{sec:prelim:didp-security}
The zero-knowledge authorization layer relies on the following security properties of the assumed \DIDP interface:
\begin{itemize}[nosep,leftmargin=1.5em]
  \item \textbf{Determinism}: For fixed inputs, $\mathsf{Derive}$ produces identical outputs. In particular, the same $(\REV, \Ctx)$ pair always yields the same derived key.
  \item \textbf{Context isolation}: Distinct context tuples $\Ctx \neq \Ctx'$ yield computationally independent derived keys. This ensures that compromise of a key derived under one context does not enable recovery of keys derived under other contexts, nor does it reveal information about the underlying $\REV$.
  \item \textbf{Identity-root recovery hardness}: We formalize this property via the following game $\mathsf{Game}^{\mathsf{rec\text{-}didp}}_{\mathcal{A}}(\lambda)$:
  \begin{enumerate}[nosep,leftmargin=1.5em]
    \item The challenger samples $\REV \xleftarrow{\$} \{0,1\}^{256}$.
    \item The adversary $\mathcal{A}$ is given oracle access to $\mathcal{O}_{\mathsf{pub}}$, which on input $(\salt, \domain)$ returns $H(\REV \| \salt \| \domain)$, and to $\mathcal{O}_{\mathsf{derive}}$, which on input $\Ctx$ returns $H(\mathsf{Derive}(\REV, \Ctx))$. These oracles model the ability to observe identity commitments and derived public identifiers without learning $\REV$ itself.
    \item $\mathcal{A}$ outputs a candidate $\REV^*$.
    \item $\mathcal{A}$ wins if $\REV^* = \REV$.
  \end{enumerate}
  For any $\PPT$ adversary $\mathcal{A}$:
  \[ \Adv^{\mathsf{rec\text{-}didp}}_{\mathcal{A}}(\lambda) = \Pr[\REV^* = \REV] \leq \negl(\lambda). \]
  Informally, no efficient adversary can recover the identity root from public outputs alone.

  We note that identity-root recovery hardness is the \emph{minimal} assumption required by the authorization soundness proof (Theorem~\ref{thm:soundness}). The proof's case analysis (Section~\ref{sec:security:soundness}) shows that any successful forgery must either produce a hash collision (handled by collision resistance of $H$) or yield the correct $\REV$ from the extracted witness (handled by recovery hardness). A stronger notion---such as infeasibility of forging $H(\mathsf{Derive}(\REV, \Ctx))$ for a fresh context $\Ctx$ without knowing $\REV$---would follow from recovery hardness composed with the collision resistance of $H$, but is not needed as a separate assumption.
  \item \textbf{Multi-algorithm isolation}: Keys derived for different cryptographic algorithms (via distinct $\mathsf{AlgID}$ values in $\Ctx$) from the same $\REV$ are computationally isolated, such that compromise of one derived key does not enable derivation of others.
\end{itemize}
No additional assumptions are made regarding the internal structure of the \DIDP. In particular, this work does not depend on any specific key derivation function (e.g., HKDF~\cite{hkdf}), memory-hard password hashing scheme (e.g., Argon2~\cite{argon2}), or mnemonic encoding used within the instantiation.

\subsection{Zero-Knowledge Proof System Model}
\label{sec:prelim:zk}
The constructions in this paper assume a general-purpose zero-knowledge proof system supporting succinct non-interactive arguments of knowledge, building on the foundational framework of interactive proofs~\cite{goldwasser-micali-rackoff}. The specific proving system is not fixed and may be instantiated using SNARK-based~\cite{groth16}, PLONK-based~\cite{plonk}, STARK-based~\cite{stark}, or IPA-based constructions~\cite{bulletproofs}. We assume the following standard properties:
\begin{itemize}[nosep,leftmargin=1.5em]
  \item \textbf{Completeness}: An honest prover with a valid witness can always produce an accepting proof.
  \item \textbf{Knowledge soundness}: For any $\PPT$ prover $\mathcal{P}^*$ that produces an accepting proof with non-negligible probability, there exists a $\PPT$ extractor $\mathcal{E}$ that, given access to $\mathcal{P}^*$, outputs a valid witness with overwhelming probability. Formally,
  \[
    \Pr[\mathsf{Verify}(\vk, \pi, x) = 1 \;\wedge\; \mathcal{E}^{\mathcal{P}^*}(x) \text{ fails}] \leq \negl(\lambda).
  \]
  \item \textbf{Zero-knowledge}: Proofs reveal no information about the private witness beyond the validity of the statement.
  \item \textbf{Public-input binding}: The proof is cryptographically bound to the declared public inputs. For any $\PPT$ adversary,
  \[
    \Pr[\mathsf{Verify}(\vk, \pi, x) = 1 \;\wedge\; \mathsf{Verify}(\vk, \pi, x') = 1 \;\wedge\; x \neq x'] \leq \negl(\lambda).
  \]
  \item \textbf{Succinctness}: Proof size and verification complexity are independent of the size of the witness.
\end{itemize}
The precise extraction model depends on the chosen proof system: pairing-based SNARKs (e.g., Groth16) typically achieve knowledge soundness in the algebraic group model or via non-black-box extraction, while IOP-based systems (e.g., STARKs, PLONK) rely on point-query extraction from the underlying interactive oracle proof. Our security reductions are stated generically in terms of the knowledge-soundness advantage $\Adv^{\mathsf{ks}}$ and are compatible with any proof system satisfying the above properties under its respective extraction model.

\paragraph{Post-quantum security stratification.}
The two reference backends differ fundamentally in their quantum security posture, and this distinction applies to all security statements in this paper:

\begin{itemize}[nosep,leftmargin=1.5em]
  \item \textbf{Circle STARK backend (Stwo, post-quantum).} Knowledge soundness reduces to FRI soundness and collision resistance of the underlying hash function (Poseidon2 over Mersenne-31). Post-quantum security of these assumptions is conditional on parameter choices that provide adequate quantum margins for both preimage resistance (Grover-style $2^{n/2}$ speedup) and collision resistance (BHT-style $2^{n/3}$ speedup); the collision margin is more demanding and the concrete M31 Poseidon2 parameters require dedicated quantum cryptanalysis to confirm the desired security level. Under the assumption that the instantiated hash function is collision-resistant against quantum adversaries, the security theorem statements of Section~\ref{sec:security} admit a quantum-adversary interpretation, provided the underlying proof system achieves knowledge soundness against QPT adversaries and the random oracle is programmable in the quantum setting (QROM); establishing these conditions formally is beyond the scope of this work.
  \item \textbf{Groth16/BN254 backend (classical only).} Knowledge soundness relies on the algebraic group model and the discrete logarithm hardness on the BN254 elliptic curve. Both assumptions are broken by Shor's algorithm on a sufficiently capable quantum computer. \textbf{The Groth16 backend is therefore classical-only and must not be deployed in post-quantum threat models.} It is provided for EVM compatibility and circuit feasibility demonstration only.
\end{itemize}

Throughout this paper, the phrase ``post-quantum-ready'' refers to the architectural property that identity commitments and the circuit statement are proof-system agnostic, enabling migration from the Groth16 backend to a transparent hash-based backend without identity rotation. Claims about post-quantum security of a concrete Circle STARK deployment are conditional on the instantiated hash function, output length, FRI parameters, and quantum-adversary model.

\subsection{Threat Model}
\label{sec:prelim:threat}
We consider an adversary $\mathcal{A}$ with the following capabilities:
\begin{itemize}[nosep,leftmargin=1.5em]
  \item Full visibility into on-chain state, transaction history, and mempool-visible submissions.
  \item The ability to submit, reorder, replay, and substitute transaction payloads and attempt proof reuse across contexts, within the limits of the consensus protocol.
  \item The ability to perform offline cryptographic attacks, including attacks enabled by future quantum computation, against traditional public-key primitives.
\end{itemize}
The adversary may attempt replay attacks by reusing authorization artifacts, substitution attacks by binding valid authorization material to unintended transactions, or linkage attacks by correlating authorization data across transactions or contexts.

We do \emph{not} assume the presence of trusted hardware, trusted execution environments, or private mempools. All authorization artifacts are assumed to be publicly observable. Compromise of the underlying \DIDP root material ($\REV$) for an identity is out of scope and treated as key compromise of the base primitive.

\paragraph{Authorization intent and the \DIDP gating role.}
A natural question is where \emph{user intent} enters the model when no explicit signature object is produced. The answer lies in the \DIDP $\mathsf{Unseal}$ operation: reconstructing $\REV$ requires the prover to supply a valid credential $\mathsf{Cred}$ (e.g., a password, biometric token, or hardware attestation accepted by the sealed artifact). Proof generation is therefore structurally gated on successful unsealing---the wallet cannot produce a valid authorization proof without first obtaining an ephemeral $\REV$, which requires the user to present their credential. This is the mechanism by which authorization intent is anchored in \ZKACE: the act of unsealing expresses the user's willingness to authorize, and the zero-knowledge proof is the cryptographic evidence of that intent. Local wallet policy governing \emph{which} transactions may trigger an unseal attempt (e.g., spending limits, confirmation dialogs, hardware-button confirmation) is outside the scope of this work and is treated as an implementation concern orthogonal to the protocol-level security guarantees.

\section{System Model}
\label{sec:system}

\subsection{Roles}
\label{sec:system:roles}
\ZKACE involves the following roles:
\begin{itemize}[nosep,leftmargin=1.5em]
  \item \textbf{Prover.} The prover is the entity controlling a deterministic identity derived via the \DIDP, typically a user or wallet application. The prover generates zero-knowledge proofs attesting that a given transaction has been authorized by the corresponding identity.
  \item \textbf{Verifier.} The verifier is the consensus-facing component of the blockchain system, such as an on-chain smart contract or a native verification rule. The verifier checks the validity of submitted proofs and enforces replay protection according to protocol rules.
  \item \textbf{Optional aggregator or relayer.} In scalable deployments, an optional aggregator or relayer may collect authorization requests from multiple provers, aggregate proofs, or submit batched authorization data to the chain. This role is not trusted and does not learn private identity material.
\end{itemize}
The security of \ZKACE does not depend on the honesty of aggregators or relayers; they serve only as performance and usability optimizations.

\subsection{On-Chain State}
\label{sec:system:state}
\ZKACE minimizes on-chain state by storing only data strictly required for authorization verification and replay prevention:
\begin{itemize}[nosep,leftmargin=1.5em]
  \item \textbf{Identity commitment.} For each identity participating in \ZKACE, the chain stores a compact commitment $\IDcom$ to the underlying deterministic identity. This commitment serves as the primary on-chain identity anchor for all future authorization proofs associated with that identity.
  \item \textbf{Replay protection state.} Depending on the chosen replay-prevention model (Section~\ref{sec:circuit:replay}), the chain additionally maintains:
  \begin{itemize}[nosep,leftmargin=1.5em]
    \item a \emph{nonce registry}, enforcing monotonic or uniqueness constraints on authorization nonces; or
    \item a \emph{nullifier set}, marking authorization tokens as spent; or
    \item an \emph{authorization index} associated with the identity commitment.
  \end{itemize}
  \item \textbf{Verifier parameters.} The verification key $\vk$ for the proof system.
\end{itemize}
No private keys, post-quantum signatures, or identity root material ($\REV$) are stored on-chain.

\subsection{Transaction Authorization Flow}
\label{sec:system:flow}
The \ZKACE authorization process for a single transaction proceeds as specified in Algorithm~\ref{alg:authflow}.

\begin{algorithm}[t]
\caption{\ZKACE Transaction Authorization}\label{alg:authflow}
\begin{algorithmic}[1]
\Statex \textbf{Prover side:}
\State Construct transaction payload $\mathsf{tx}$ and compute $\TxHash \leftarrow H(\mathsf{tx})$.
\State Using identity material derived via the \DIDP, generate witness $w = (\REV, \salt, \Ctx, \nonce, \aux)$.
\State Compute public inputs $\pi_{\mathsf{pub}} = (\IDcom, \TxHash, \domain, \target, \rpcom)$.
\State Generate zero-knowledge proof $\pi \leftarrow \mathsf{Prove}(\mathsf{pk}, w, \pi_{\mathsf{pub}})$.
\State Submit $(\mathsf{tx}, \pi, \pi_{\mathsf{pub}})$ to the chain (directly or via a relayer).
\Statex
\Statex \textbf{Verifier side} (given $\mathsf{tx}$, $\pi$, $\pi_{\mathsf{pub}}$)\textbf{:}
\State \textbf{Context binding check:} Recompute $\TxHash' \leftarrow H(\mathsf{tx})$ from the submitted transaction payload and verify $\TxHash' = \pi_{\mathsf{pub}}.\TxHash$. Reject if mismatch.
\State \textbf{Public-input validation:} Verify that $\pi_{\mathsf{pub}}.\IDcom$ is a registered on-chain identity commitment, and that $\pi_{\mathsf{pub}}.\domain$ and $\pi_{\mathsf{pub}}.\target$ match the expected values for the transaction context. Reject if any mismatch.
\State \textbf{Proof verification:} Verify $\mathsf{ZKVerify}(\vk, \pi, \pi_{\mathsf{pub}}) \stackrel{?}{=} 1$. Reject if verification fails.
\State \textbf{Replay check:} Verify replay predicate (nonce freshness or nullifier novelty). Reject if violated.
\State If all checks pass, accept transaction and update replay state.
\end{algorithmic}
\end{algorithm}

Throughout this process, no post-quantum signature objects are revealed or verified on-chain. Authorization validity is established solely through the correctness of the zero-knowledge proof and its binding to the committed identity and transaction hash.

\paragraph{End-to-end derivation invariant.}
For the system model to function correctly end-to-end, the circuit-native derivation function used inside the zero-knowledge proof must be \emph{identical} to the external \DIDP derivation function: same hash primitive, same encoding convention, same field-element representation. When this invariant holds, a proof attesting to circuit-native derivation correctness simultaneously attests to \DIDP-level derivation correctness, and the authorization guarantee is end-to-end. The pipeline is: the prover unseals $\REV$ from the sealed artifact via $\mathsf{Unseal}$, uses $(\REV, \Ctx)$ to derive authorization material via the circuit-native $\mathsf{Derive}$, binds all authorization-relevant fields into the proof witnesses, and submits a zero-knowledge proof. The verifier checks only public inputs and proof validity and never observes $\REV$ or any derived key. The formal statement of this requirement---and the consequence of violating it---is given in Section~\ref{sec:circuit} as the canonical derivation requirement within constraint~(C2).

\section{Identity Commitment and Binding Model}
\label{sec:commitment}

\subsection{Commitment Construction}
\label{sec:commitment:construction}
For each deterministic identity participating in \ZKACE, the chain stores a compact identity commitment defined as:
\begin{equation}\label{eq:idcom}
  \IDcom = H(\REV \| \salt \| \domain),
\end{equation}
where:
\begin{itemize}[nosep,leftmargin=1.5em]
  \item $\REV$ is the 256-bit Root Entropy Value serving as the identity root, reconstructed ephemerally via the \DIDP $\mathsf{Unseal}$ operation;
  \item $\salt \xleftarrow{\$} \{0,1\}^\lambda$ is an identity-specific random or pseudorandom value;
  \item $\domain$ encodes domain-separation parameters, such as the target blockchain or application namespace.
\end{itemize}
The commitment serves as the primary on-chain identity anchor. Authorization additionally requires per-transaction public inputs ($\TxHash$, $\domain$, $\target$, $\rpcom$) and replay-prevention state, but $\IDcom$ is the persistent identity-binding component. Neither $\REV$ nor any derived keys are revealed on-chain.

The inclusion of $\salt$ serves multiple purposes. First, it enables \emph{commitment re-randomization}, allowing the same underlying identity to be associated with multiple unlinkable commitments over time. Second, it provides a mechanism for \emph{cross-chain or cross-domain isolation}, ensuring that identity commitments used in different execution environments are cryptographically distinct. Finally, salt-based randomization supports basic privacy controls by preventing trivial linkage between commitments derived from the same identity.

\subsection{Public Input Binding Rules}
\label{sec:commitment:binding}
Authorization proofs in \ZKACE are verified against a set of public inputs that bind the proof to a specific transaction and execution context. These public inputs are consensus-visible and must be explicitly declared to the verifier. The following values are required:
\begin{enumerate}[nosep,leftmargin=1.5em]
  \item \textbf{Transaction hash} ($\TxHash$). Binds the authorization proof to a specific transaction payload, preventing substitution attacks.
  \item \textbf{Chain or domain identifier} ($\domain$). Ensures authorization proofs are valid only within the intended execution environment, preventing cross-chain replay.
  \item \textbf{Derivation-target hash} ($\target$). This is a hash commitment to context-derived authorization key material, i.e., $\target = H(\mathsf{Derive}(\REV,\Ctx))$, ensuring the proof is bound to a specific identity derivation context.
  \item \textbf{Replay-prevention commitment} ($\rpcom$). A nonce value, nullifier, or authorization-index commitment included to enforce single-use or monotonic authorization semantics.
\end{enumerate}
Collectively, these public inputs ensure that each authorization proof is bound to \emph{who} authorized the transaction (via $\IDcom$), \emph{which transaction} was authorized (via $\TxHash$), \emph{which derivation context} was used (via $\target$), and \emph{where and when} the authorization is valid (via $\domain$ and $\rpcom$).

\subsection{Linkability Considerations}
\label{sec:commitment:linkability}
\ZKACE allows multiple identity commitments to be associated with the same underlying deterministic identity. This flexibility enables different trade-offs between privacy, usability, and auditability.

One approach is to maintain a single long-lived commitment per identity, maximizing simplicity and minimizing on-chain state. Alternatively, a prover may periodically rotate the salt value and register a new commitment, reducing linkability between authorization events at the cost of additional on-chain updates.

Importantly, linkability is controlled entirely at the commitment layer. Authorization proofs themselves reveal no additional information about the underlying identity beyond what is implied by the chosen commitment policy.

\section{Authorization Semantics}
\label{sec:auth-semantics}

\subsection{What We Prove: Authorization}
\label{sec:auth-semantics:what}
At a high level, \ZKACE proves that a transaction is authorized by a committed identity under a deterministic derivation framework. The authorization statement proven in zero knowledge can be expressed as:

\begin{quote}
\emph{There exists a Root Entropy Value $\REV$ and associated derivation context such that: (i) the identity commitment derived from $\REV$ matches the on-chain commitment $\IDcom$, and (ii) the identity anchored by $\REV$ has authorized the transaction with hash $\TxHash$ under the prescribed derivation context and replay-prevention rules.}
\end{quote}

This statement is existential in nature. The zero-knowledge proof demonstrates the existence of a valid authorizing identity consistent with the on-chain commitment, without revealing $\REV$ or any derived keys. Authorization, in this model, is a \emph{semantic property}: a transaction is valid if it can be shown that an identity anchored on-chain has approved the transaction according to the system's deterministic authorization rules.

\subsection{What We Do Not Prove}
\label{sec:auth-semantics:nongoals}
To avoid ambiguity, we explicitly state what \ZKACE does \emph{not} attempt to prove:
\begin{itemize}[nosep,leftmargin=1.5em]
  \item \textbf{No verification of post-quantum signature objects.} The zero-knowledge circuit does not verify the correctness of any ML-DSA or other post-quantum signature. Signature objects are not provided as inputs to the circuit and do not appear on-chain.
  \item \textbf{No lattice arithmetic inside the circuit.} The circuit does not implement lattice-based cryptographic operations, nor does it encode the verification logic of any post-quantum signature scheme.
  \item \textbf{No compression of post-quantum signatures.} \ZKACE does not claim to compress post-quantum signatures or reduce their size. Instead, it eliminates the need to place post-quantum signature material on-chain altogether.
\end{itemize}
The construction should therefore not be interpreted as a zero-knowledge optimization of post-quantum signatures. It represents a \emph{shift in authorization semantics}, in which explicit signature verification is replaced by zero-knowledge proofs of deterministic authorization.

\subsection{Authorization Token Definition}
\label{sec:auth-semantics:token}
For concreteness, we define a unified authorization token that captures the semantic intent of transaction approval. Let the authorization token be:
\begin{equation}\label{eq:auth}
  \Auth = H(\REV \| \Ctx \| \TxHash \| \domain \| \nonce),
\end{equation}
where each component binds the authorization to a specific identity root ($\REV$), derivation context ($\Ctx$), transaction ($\TxHash$), execution domain ($\domain$), and replay-prevention state ($\nonce$).

The authorization token $\Auth$ is not a cryptographic signature and is not intended to be publicly verified outside the zero-knowledge proof. It serves as a conceptual representation of the authorization semantics enforced by the circuit. The zero-knowledge proof attests that such a token can be correctly derived from a committed identity root and that it binds the identity to the declared public inputs.

\section{ZK Circuit Specification}
\label{sec:circuit}

This section specifies the zero-knowledge statement proven by \ZKACE and the corresponding circuit structure. The circuit is designed to be compatible with standard constraint systems (e.g., R1CS or PLONK-like arithmetization) and to avoid embedding post-quantum signature verification logic.

\subsection{Circuit Statement}
\label{sec:circuit:statement}
Let $\IDcom$ be an on-chain identity commitment, $\TxHash$ the transaction hash, $\domain$ a chain/application domain tag, $\target$ a derivation-target hash, and $\rpcom$ a replay-prevention commitment. The \ZKACE circuit proves the following statement:

\begin{quote}
\emph{Prove knowledge of $(\REV, \salt, \Ctx, \nonce, \aux)$ such that:}
\begin{enumerate}[nosep]
  \item \emph{the identity commitment recomputed from $(\REV, \salt, \domain)$ equals the public commitment $\IDcom$;}
  \item \emph{the target binding is consistent with deterministic key derivation under $(\REV, \Ctx)$;}
  \item \emph{the transaction hash $\TxHash$ is authorized under the derived context and replay-prevention rule.}
\end{enumerate}
\end{quote}

This statement is existential: the proof establishes that \emph{some} identity root consistent with $\IDcom$ authorized the specific transaction context, without revealing $\REV$ or any derived keys.

\subsection{Witness and Public Inputs}
\label{sec:circuit:io}

\paragraph{Private witness (secret inputs).}
The witness consists of:
\begin{itemize}[nosep,leftmargin=1.5em]
  \item $\REV$: the 256-bit Root Entropy Value (identity root, reconstructed ephemerally via \DIDP $\mathsf{Unseal}$);
  \item $\salt$: commitment salt associated with the identity;
  \item $\Ctx$: derivation context tuple $(\mathsf{AlgID}, \mathsf{Domain}, \mathsf{Index})$ specifying the target cryptographic algorithm, application domain, and derivation index;
  \item $\nonce$: replay-prevention value (or material used to derive a nullifier);
  \item $\aux$: auxiliary secret material that may be required by a specific circuit instantiation (e.g., intermediate absorption states for multi-round Poseidon calls, or padding values for variable-length input encoding). The core constraints (C1)--(C5) are stated without reference to $\aux$; it is included in the witness definition as an instantiation-reserved slot so that concrete deployments can carry implementation-specific intermediates without modifying the abstract witness structure.
\end{itemize}

\paragraph{Public inputs.}
The public inputs consist of:
\begin{itemize}[nosep,leftmargin=1.5em]
  \item $\IDcom$: on-chain identity commitment;
  \item $\TxHash$: transaction hash to be authorized;
  \item $\domain$: chain/application domain tag;
  \item $\target$: hash commitment to the context-derived authorization key material, i.e., $\target = H(\mathsf{Derive}(\REV,\Ctx))$;
  \item $\rpcom$: replay-prevention commitment (nonce commitment or nullifier).
\end{itemize}
All public inputs are consensus-visible and must be bound to the proof to prevent substitution and replay attacks.

\subsection{Core Constraints}
\label{sec:circuit:constraints}
We describe the circuit constraints at a high level, in a form intended to map naturally to standard arithmetization. Let $H(\cdot)$ denote the selected in-circuit hash used for binding.

\begin{enumerate}[label=\textbf{(C\arabic*)},leftmargin=2.5em]
  \item \textbf{Commitment consistency.} The circuit recomputes the identity commitment and enforces equality with $\IDcom$:
  \begin{equation}\label{eq:c1}
    H(\REV \| \salt \| \domain) = \IDcom.
  \end{equation}
  This ensures that the prover controls an identity root consistent with the on-chain anchor.

  \item \textbf{Deterministic derivation correctness (target binding).} The circuit enforces that the public target hash corresponds to the context-specific key derived from the identity root:
  \begin{equation}\label{eq:c2}
    \target = H(\mathsf{Derive}(\REV, \Ctx)).
  \end{equation}
  The outer hash ensures that the derived key itself is never exposed as a public input; only a hash commitment to it ($\target$) is revealed on-chain, preserving the secrecy of identity-derived cryptographic material. The domain component of $\Ctx$ is constrained to equal the public input $\domain$ by constraint~(C5), ensuring that derivation correctness is bound to the declared domain.

  \emph{Implementation note:} To minimize constraint count, $\mathsf{Derive}$ is instantiated inside the circuit as a deterministic sequence of ZK-friendly hashes (e.g., Poseidon), avoiding the overhead of simulating non-native field arithmetic or expensive conventional encodings (e.g., Bech32, Base58) inside the circuit. The context tuple $\Ctx = (\mathsf{AlgID}, \mathsf{Domain}, \mathsf{Index})$ is encoded as a fixed-length field-element input to the hash.

  \emph{Instantiation scope and canonical derivation requirement.} The function $\mathsf{Derive}$ referenced in (C2) denotes the \emph{circuit-native derivation function}---i.e., the concrete ZK-friendly hash sequence evaluated inside the proof system---rather than the external \DIDP $\mathsf{Derive}$ implementation, which may use non-circuit-friendly primitives (e.g., HKDF, HMAC-SHA-256). The security analysis (Section~\ref{sec:security}) operates entirely on the circuit-native function: what the proof system enforces is that the prover knows a $\REV$ from which the declared $\target$ can be derived via this circuit-native function.

  \textbf{Canonical derivation requirement (deployment invariant).} For the end-to-end security claim to hold---i.e., for a valid proof to imply that the \emph{DIDP-registered} identity authorized the transaction---deployments \emph{must} adopt the circuit-native derivation as the canonical implementation of $\mathsf{Derive}$ in the external \DIDP. Concretely, the off-chain \DIDP pipeline must use the same ZK-friendly hash function, encoding convention, and field-element representation as the circuit, ensuring that circuit-native $\mathsf{Derive}(\REV, \Ctx)$ and external $\mathsf{Derive}(\REV, \Ctx)$ produce identical outputs on all inputs. When this invariant holds, a proof attesting to circuit-native derivation correctness simultaneously attests to DIDP-level derivation correctness, and the reduction in Theorem~\ref{thm:soundness} applies end-to-end. Violating this invariant---e.g., by using HKDF externally while the circuit uses Poseidon---would break the functional equivalence and invalidate the end-to-end security argument, even if each component is individually secure. ACE-GF~\cite{wang-acegf-2026} satisfies this invariant by construction, as its off-chain derivation is defined using the same Poseidon/Poseidon2 primitives as the reference circuit.

  \item \textbf{Authorization binding to $\TxHash$.} The circuit binds the authorization to the specific transaction hash and context by enforcing the authorization token relation:
  \begin{equation}\label{eq:c3}
    \Auth = H(\REV \| \Ctx \| \TxHash \| \domain \| \nonce).
  \end{equation}

  \item \textbf{Anti-replay (nonce/nullifier rule).} The circuit commits to replay-prevention state in a way that the verifier can enforce uniqueness or monotonicity. Two canonical forms are supported:
  \begin{itemize}[nosep,leftmargin=1.5em]
    \item \emph{Nonce commitment:}
    \begin{equation}\label{eq:c4-nonce}
      \rpcom = H(\IDcom \| \nonce),
    \end{equation}
    enabling per-identity nonce tracking. \emph{Note on verifiability:} because $\nonce$ is a private witness, the chain cannot directly read the nonce value from $\rpcom$. Enforcing strict monotonicity (e.g., expected nonce $= n+1$) therefore requires either (a) making $\nonce$ a public input so the chain can compare it against its stored nonce state, or (b) treating $\rpcom$ as an opaque commitment and enforcing uniqueness only (i.e., rejecting any $\rpcom$ that has appeared before), which reduces to nullifier-set semantics. Deployments requiring strict monotonic nonce ordering should use option~(a).
    \item \emph{Nullifier (single-use token):}
    \begin{equation}\label{eq:c4-null}
      \rpcom = H(\Auth \| \domain),
    \end{equation}
    enabling the chain to reject reuse by maintaining a nullifier set.
  \end{itemize}

  \item \textbf{Domain separation and context consistency.} The circuit enforces that the $\domain$ value declared as a public input is used consistently in all commitment and binding computations, preventing cross-chain or cross-application proof reuse. This includes two requirements:
  \begin{enumerate}[nosep,label=(\alph*)]
    \item \emph{Public-input consistency:} $\domain$ appears as an explicit input to the commitment binding~(C1), the authorization binding~(C3), and the replay-prevention computation~(C4).
    \item \emph{Context--domain binding:} The circuit enforces that the domain component of the context tuple matches the publicly declared domain:
    \begin{equation}\label{eq:c5-ctx}
      \Ctx.\mathsf{Domain} = \domain.
    \end{equation}
    This constraint ensures that the derivation in~(C2) operates under the same domain as all other bindings, closing the link between the context-specific derivation and the public domain identifier.
  \end{enumerate}
\end{enumerate}

These constraints ensure that a valid proof simultaneously attests to identity ownership (via preimage knowledge of $\REV$ for $\IDcom$), context correctness (via deterministic derivation under $\Ctx$), transaction-specific authorization (via binding to $\TxHash$), and replay prevention (via $\rpcom$).

\paragraph{Constraint count estimation.}
The circuit cost is dominated by the in-circuit hash evaluations. The reference implementation provides two backends with different Poseidon parameterizations: (i)~a Groth16/BN254 backend using a manual Poseidon implementation with width $t=3$, S-box exponent $\alpha=5$, 8 full rounds (4+4), and 56 partial rounds over the BN254 scalar field; and (ii)~a Circle STARK backend (Stwo~\cite{stwo}) using Poseidon2 with width $t=16$ and $\alpha=5$ over the Mersenne-31 field.

Table~\ref{tab:constraint-breakdown} provides a logical five-relation baseline for the Groth16/BN254 backend.

\begin{table}[t]
\centering
\caption{Logical five-relation baseline for the \ZKACE circuit (Groth16/BN254 backend, Poseidon $t=3$, $\alpha=5$).}\label{tab:constraint-breakdown}
\begin{tabular}{@{}lccc@{}}
\toprule
\textbf{Constraint} & \textbf{Inputs} & \textbf{Hash calls} & \textbf{R1CS} \\
\midrule
(C1) Commitment consistency   & 3 & 1 & ${\approx}\,$200 \\
(C2) Derivation correctness   & 4\,+\,1 & 2 & ${\approx}\,$400 \\
(C3) Authorization binding     & 7 & 1 & ${\approx}\,$400 \\
(C4) Replay prevention         & 2 & 1 & ${\approx}\,$200 \\
(C5) Domain sep.\ + \texttt{enforce\_equal}  & --- & --- & 4 \\
\midrule
\textbf{Total}                 & & \textbf{5} & $\mathbf{{\approx}\,1{,}200}$ \\
\bottomrule
\end{tabular}
\end{table}

The nonce-registry and nullifier-set semantics share the same post-derivation replay circuit and therefore produce identical constraint counts. Domain separation~(C5) is enforced as an equality constraint on the domain component of $\Ctx$ and adds negligible overhead.

The Circle STARK backend compiles the same five-constraint circuit to 341 active rows (11 Poseidon2 permutations, log\_size=9), 100 main trace columns, 34 preprocessed columns, and 361 AIR constraint expressions. The AIR representation encodes more logic per row than an R1CS multiplication gate, so the two constraint counts are not directly comparable, but both are extraordinarily compact by ZK circuit standards.

For comparison, verifying a lattice-based signature such as ML-DSA inside a zero-knowledge circuit requires expressing: (i)~Number Theoretic Transforms (NTTs) over degree-256 polynomial rings in $\mathbb{Z}_q$ with $q = 2^{23} - 2^{13} + 1$, each requiring $O(n \log n)$ multiplications in a non-native field; (ii)~high-bit-width modular reductions (23-bit modulus emulated over a ${\approx}\,$254-bit proof-system field, requiring range-check gadgets per limb); and (iii)~rejection-sampling and hint-reconstruction logic with data-dependent branching. ML-DSA verification at NIST level~2 involves $k \times \ell = 4 \times 4$ matrix--vector NTTs and $k = 4$ polynomial reconstructions, yielding a structural lower bound on the order of millions of R1CS constraints even with optimized non-native arithmetic (see Table~\ref{tab:constraint-comparison} in Section~\ref{sec:implementation:comparative} for a detailed breakdown). The measured \ZKACE circuit (2,155 R1CS / 341 AIR active rows; Table~\ref{tab:constraint-optimized}) is therefore \textbf{roughly 500--2,300$\times$ smaller} than an in-circuit post-quantum signature verification. This gap reflects the structural difference between a small number of ZK-friendly hash invocations and the full algebraic machinery of lattice-based signature verification. The resulting circuit size makes proving feasible on consumer-grade hardware (${\approx}\,$14.5\,ms with the STARK backend, ${\approx}\,$37.3\,ms with Groth16; see Section~\ref{sec:implementation}) and enables practical batching while keeping recursive composition as a natural extension (Section~\ref{sec:extensions}).

\subsection{Circuit Optimization via Sponge Multi-Output Squeeze}
\label{sec:circuit:optimization}

The five-constraint, five-hash-call baseline above follows a convention inherited from HMAC-style constructions, in which each logical binding step is protected by an independent hash invocation to defend against length-extension attacks. Length-extension attacks are a structural property of Merkle-Dam\-g\aa{}rd hash functions (e.g., MD5, SHA-1, SHA-256). Poseidon and Poseidon2 use a sponge construction and are structurally immune to length-extension attacks; the outer wrapping implied by HMAC-style double-hashing is therefore unnecessary.

This observation, which emerged from discussion with Alberto La Rocca~\cite{elias-personal-2026}, who independently confirmed a 2-call lower bound for the core MAC computation in his implementation, motivates the implemented sponge optimization. The logical hash relation count is reduced from 5 to 3 while all security properties formalized in Section~\ref{sec:security} are preserved.

\paragraph{C2: inner + outer $\to$ single multi-output squeeze.}
Constraint (C2) currently evaluates $\target = H(H(\REV, \Ctx))$: an inner derivation hash followed by an outer commitment wrapper that prevents the derived key from being directly exposed as a public input. A Poseidon sponge supports multi-output \emph{squeeze}: after absorbing $(\REV, \Ctx)$, two independent output lanes are produced in a single permutation pass. One lane serves as internal derived key material; the other is released as the public $\target$. The capacity portion of the sponge state ensures that the internal lane cannot be inferred from the public lane. This collapses the two hash calls in (C2) to one, removing one logical sponge relation.

\paragraph{C3 + C4: two calls $\to$ single multi-output squeeze.}
Constraint (C3) computes $\Auth = H(\REV \| \Ctx \| \TxHash \| \domain \| \nonce)$ and constraint (C4) derives $\rpcom$ as either $H(\Auth \| \domain)$ (nullifier mode) or $H(\IDcom \| \nonce)$ (nonce mode). The C4 call is a second invocation on the output of C3, again following HMAC-style defensive design. Since $\Auth$ already incorporates $\nonce$ and $\domain$, both outputs can be obtained by squeezing two lanes from the same sponge state after a single absorption of $(\REV, \Ctx, \TxHash, \domain, \nonce)$. This collapses C3 and C4 into one hash call, removing a further logical sponge relation.

\paragraph{Optimized constraint summary.}

\begin{table}[t]
\centering
\caption{Implemented \ZKACE Groth16 replay predicate constraint breakdown. C1, C2, and the merged C3+C4 are sponge absorptions over the width-3 BN254 Poseidon backend.}\label{tab:constraint-optimized}
\begin{tabular}{@{}lccc@{}}
\toprule
\textbf{Constraint} & \textbf{Inputs} & \textbf{Sponge permutations} & \textbf{R1CS (measured)} \\
\midrule
(C1) Commitment consistency        & 3 & 2 & 480 \\
(C2) Derivation correctness        & 4 & 3 & 720 \\
(C3+C4) Authorization + replay     & 7 & 4 & 960 \\
(C5) Domain sep.\ + \texttt{enforce\_equal} & --- & --- & 4 \\
\midrule
\textbf{Total (implemented)}       &   & \textbf{9} & \textbf{2,155} \\
\bottomrule
\end{tabular}
\end{table}

The measured constraint count is 2,155~R1CS (Groth16/BN254) and 341 active rows / 361 AIR constraint expressions (Circle STARK), as reported by the current implementation and benchmark suite. This optimization is implemented in the reference implementation. The security arguments of Section~\ref{sec:security} carry over: the merged (C3+C4) sponge enforces the same binding relation over $(\REV, \Ctx, \TxHash, \domain, \nonce)$.

Each squeeze output is assigned an explicit semantic lane label to prevent cross-use: $\mathsf{SQUEEZE\_TARGET}$ (C2, second lane), $\mathsf{SQUEEZE\_AUTH}$ (C3+C4, first lane), and $\mathsf{SQUEEZE\_RP}$ (C3+C4, second lane). Under the sponge indifferentiability framework~\cite{bertoni-sponge-2011,czajkowski-quantum-sponge-2019}, these domain-separated outputs of the same sponge instance are modeled as independent random oracle outputs, provided the capacity is sufficiently large relative to the total squeezed output. Concretely, the Groth16 backend has capacity $\approx 254$ bits and the STARK backend has capacity $\approx 248$ bits. Whether these capacity sizes provide sufficient quantum collision resistance margins (BHT algorithm gives ${\approx}\,2^{n/3}$ quantum queries for an $n$-bit hash) requires dedicated parameter analysis; the claim of adequacy is conditional on that analysis and should not be taken as established.

\paragraph{Aggressive 2-call variant.}
A further reduction absorbs (C2) into the same sponge call as (C3+C4), treating $(\REV, \Ctx)$ derivation and $(\REV, \Ctx, \TxHash, \domain, \nonce)$ authorization as a single multi-rate absorption with three squeezed outputs: derived key commitment, authorization token, and replay-prevention commitment. This matches the 2-call lower bound identified by Elias~\cite{elias-personal-2026}. While implementable, the merged absorption couples the security proofs for derivation correctness (C2) and authorization binding (C3) into a single reduction, making formal separation of these guarantees less direct. We present this as a future direction; formalizing the merged reduction is left as an open problem.

\subsection{Replay Protection Designs}
\label{sec:circuit:replay}
\ZKACE supports multiple replay-prevention strategies. We present two canonical options.

\paragraph{Option A: Nonce registry (account-style model).}
In an account-style model, each identity commitment maintains a nonce state on-chain. The proof includes a nonce commitment (Eq.~\ref{eq:c4-nonce}) as a public input, and the verifier enforces that the nonce is fresh according to the protocol rule (e.g., strictly increasing, or equal to the expected next nonce).

\emph{Trade-offs.} Nonce registries provide simple, deterministic replay protection and integrate naturally with account abstraction. However, they introduce per-identity state updates and reveal ordering information, which may be undesirable in privacy-preserving settings. Additionally, nonce-based models require sequential submission, limiting parallelism.

\paragraph{Option B: Nullifier set (privacy-style model).}
In a privacy-oriented model, the proof outputs a nullifier derived from the authorization token (Eq.~\ref{eq:c4-null}). The verifier maintains a global set of spent nullifiers and rejects any transaction whose nullifier has appeared before.

\emph{Trade-offs.} Nullifiers support single-use authorization without requiring a per-identity nonce counter and can better accommodate unlinkability across authorizations. However, maintaining and checking a nullifier set introduces global state growth, and careful design is required to prevent unintended linkage across domains.

\begin{remark}
The choice between nonce and nullifier models is deployment-specific and does not affect the core security properties of \ZKACE. Both models satisfy the replay-resistance definition (Definition~\ref{def:replay}) under their respective verifier enforcement rules.
\end{remark}

\section{Security Definitions and Analysis}
\label{sec:security}

This section formalizes the security properties of \ZKACE via explicit game-based definitions and provides reduction-based proofs to standard assumptions, following the code-based game-playing methodology~\cite{bellare-rogaway}. For a comprehensive treatment of the underlying cryptographic tools, we refer the reader to~\cite{boneh-shoup}. Throughout, $\mathcal{A}$ denotes a $\PPT$ adversary (classical probabilistic polynomial-time). The theorem statements therefore establish classical security; a quantum-adversary ($\QPT$) interpretation requires that (i) the instantiated proof system achieves knowledge soundness against $\QPT$ adversaries, and (ii) the random oracle is programmable in the quantum setting (QROM). When the Circle STARK backend is used with Poseidon2 modeled as a quantum-accessible random oracle, the same theorem statements are conjectured to hold under $\QPT$ adversaries, conditional on the backend's quantum knowledge soundness and the quantum collision-resistance margin of the hash function (see Section~\ref{sec:limitations}). Formalizing these conditions is beyond the scope of this work.

\begin{remark}[Mapping practical attack scenarios to formal games]\label{rem:attack-mapping}
The threat model (Section~\ref{sec:prelim:threat}) grants $\mathcal{A}$ the ability to submit, reorder, replay, and substitute transaction payloads. The four security games defined below capture these capabilities as follows:
\begin{itemize}[nosep,leftmargin=1.5em]
  \item \emph{Unauthorized authorization (identity theft):} An adversary who does not possess $\REV$ attempts to produce a valid proof for an on-chain identity commitment. This is captured by the authorization soundness game (Definition~\ref{def:soundness}).
  \item \emph{Front-running and transaction substitution:} An adversary observes a valid authorization proof in the mempool and attempts to rebind it to a different transaction hash. Because the proof is cryptographically bound to $\TxHash$ via constraint~(C3), this is captured by the substitution-resistance game (Definition~\ref{def:substitution}).
  \item \emph{Replay and state-race attacks:} An adversary attempts to resubmit a previously accepted authorization, or to exploit concurrent submissions to cause double-spending. Both are captured by the replay-resistance game (Definition~\ref{def:replay}), whose analysis covers the nonce (sequential ordering) and nullifier (set-membership) enforcement models.
  \item \emph{Cross-chain or cross-application proof migration:} An adversary takes a valid proof from one domain and attempts to use it in another. This is captured by the cross-domain separation game (Definition~\ref{def:domain}), enforced by the inclusion of $\domain$ in all commitment and binding computations (C1--C5).
\end{itemize}
Together, the four games provide coverage of the attack surface described in the threat model. Attacks not captured---such as compromise of the underlying $\REV$ or denial-of-service at the consensus layer---are out of scope, as noted in Section~\ref{sec:prelim:threat}.
\end{remark}

\subsection{Authorization Soundness}
\label{sec:security:soundness}

\begin{definition}[Authorization Soundness Game]\label{def:soundness}
The game $\mathsf{Game}^{\mathsf{auth}}_{\mathcal{A}}(\lambda)$ proceeds as follows:
\begin{enumerate}[nosep,leftmargin=1.5em]
  \item \textbf{Setup.} The challenger generates a fresh identity: $\REV \xleftarrow{\$} \{0,1\}^{256}$, $\salt \xleftarrow{\$} \{0,1\}^\lambda$, and $\IDcom = H(\REV \| \salt \| \domain)$. The verifier key $\vk$ and the commitment $\IDcom$ are given to $\mathcal{A}$. The values $\REV$ and $\salt$ are \emph{not} disclosed.
  \item \textbf{Oracle access.} $\mathcal{A}$ may observe all public on-chain data, including previously accepted proofs and their public inputs. $\mathcal{A}$ has no access to the sealed artifact $\mathsf{SA}$, the credential $\mathsf{Cred}$, or any $\mathsf{Unseal}$ oracle; $\mathcal{A}$ interacts with the \DIDP only through the public oracles $\mathcal{O}_{\mathsf{pub}}$ (which returns $\IDcom$ given $\salt, \domain$) and $\mathcal{O}_{\mathsf{derive}}$ (which returns $H(\mathsf{Derive}(\REV,\Ctx))$ for any queried context $\Ctx$). This precisely matches the adversarial interface of the \DIDP recovery game (Section~\ref{sec:prelim:didp-security}).
  \item \textbf{Output.} $\mathcal{A}$ outputs a candidate $(\pi^*, \pi^*_{\mathsf{pub}})$ where $\pi^*_{\mathsf{pub}}$ includes $\IDcom$ as the identity commitment.
  \item \textbf{Win condition.} $\mathcal{A}$ wins if $\mathsf{ZKVerify}(\vk, \pi^*, \pi^*_{\mathsf{pub}}) = 1$ and the replay predicate passes for the submitted tuple.
\end{enumerate}
The advantage of $\mathcal{A}$ is $\Adv^{\mathsf{auth}}_{\mathcal{A}}(\lambda) = \Pr[\mathsf{Game}^{\mathsf{auth}}_{\mathcal{A}}(\lambda) = 1]$.
\end{definition}

\begin{remark}[Strength of the adversarial model]\label{rem:cma-equivalence}
Definition~\ref{def:soundness} uses passive observation of accepted proofs as the baseline adversarial interface. This model is sufficient for our core reductions. Extending to an explicit adaptive chosen-message authorization oracle is possible, but requires additional proof-system-level assumptions (e.g., simulation soundness/non-malleability of the authorization interface) to preserve an unchanged bound. We leave that strengthening as future formalization.
\end{remark}

\begin{theorem}[Authorization Soundness]\label{thm:soundness}
Under (i) knowledge soundness of the proof system, (ii) collision resistance of $H$, and (iii) identity-root recovery hardness of the \DIDP (Section~\ref{sec:prelim:didp-security}), for any $\PPT$ adversary $\mathcal{A}$:
\[
  \Adv^{\mathsf{auth}}_{\mathcal{A}}(\lambda) \leq \Adv^{\mathsf{ks}}_{\mathcal{B}_1}(\lambda) + \Adv^{\mathsf{cr}}_{\mathcal{B}_2}(\lambda) + \Adv^{\mathsf{rec\text{-}didp}}_{\mathcal{B}_3}(\lambda) + \negl(\lambda).
\]
\end{theorem}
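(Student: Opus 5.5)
We use a standard hybrid argument with a three-way case split on the witness extracted from the forger.

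\emph{Game 0 to Game 1.} Game~0 is $\mathsf{Game}^{\mathsf{auth}}_{\mathcal{A}}$. In Game~1 we run the knowledge extractor $\mathcal{E}$ guaranteed by knowledge soundness (Section~\ref{sec:prelim:zk}) on $\mathcal{A}$, viewed as a prover $\mathcal{B}_1$ for the statement $\pi^*_{\mathsf{pub}}$. Game~1 declares failure whenever $\mathsf{ZKVerify}(\vk,\pi^*,\pi^*_{\mathsf{pub}})=1$ but $\mathcal{E}$ does not output a witness $w^*=(\REV^*,\salt^*,\Ctx^*,\nonce^*,\aux^*)$ satisfying (C1)--(C5) for $\pi^*_{\mathsf{pub}}$. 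By definition the difference between the two games is at most $\Adv^{\mathsf{ks}}_{\mathcal{B}_1}(\lambda)$.

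There is one subtlety here. $\mathcal{A}$ has oracle access to $\mathcal{O}_{\mathsf{pub}}$ and $\mathcal{O}_{\mathsf{derive}}$, and it sees previously accepted proofs. The extractor must therefore run relative to these oracles. I would handle this by letting $\mathcal{B}_1$ internally forward the oracle calls to the challenger. I would also state explicitly that the honest proofs $\mathcal{A}$ observes are simulated, or supplied by the challenger, in a way compatible with the extraction model. This is exactly the limitation flagged in Remark~\ref{rem:cma-equivalence}, so the proof restricts attention to the passive model.

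\emph{Case analysis in Game~1.} Conditioned on successful extraction, constraint (C1) gives $H(\REV^*\|\salt^*\|\domain^*)=\IDcom=H(\REV\|\salt\|\domain)$, where $\domain^*$ is the domain in $\pi^*_{\mathsf{pub}}$. There are two cases.
\begin{itemize}
\item \textbf{Case (a):} $(\REV^*,\salt^*,\domain^*)\neq(\REV,\salt,\domain)$. Then $\mathcal{B}_2$ outputs the two preimages as a collision for $H$. This event has probability at most $\Adv^{\mathsf{cr}}_{\mathcal{B}_2}(\lambda)$. We rely on the unambiguous encoding of $\|$ from Section~\ref{sec:prelim:notation}, so that distinct tuples yield distinct hash inputs.
\item \textbf{Case (b):} $\REV^*=\REV$. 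Here we build $\mathcal{B}_3$ for $\mathsf{Game}^{\mathsf{rec\text{-}didp}}$. $\mathcal{B}_3$ embeds its challenge $\REV$ as the identity root. It samples $\salt$ itself and obtains $\IDcom$ via $\mathcal{O}_{\mathsf{pub}}(\salt,\domain)$. It answers $\mathcal{A}$'s oracle queries by forwarding them to its own oracles. It then runs $\mathcal{A}$ and the extractor and outputs $\REV^*$. Since $\mathcal{A}$'s view is identical to Game~1, $\mathcal{B}_3$ wins with probability at least $\Pr[\text{case (b)}]$, which bounds this case by $\Adv^{\mathsf{rec\text{-}didp}}_{\mathcal{B}_3}(\lambda)$.
\end{itemize}
Summing the hybrid loss and the two cases gives the stated bound. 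The residual $\negl(\lambda)$ term absorbs the extractor's failure probability, the probability that the simulated view deviates, and the chance of guessing $\salt$.

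\emph{Main obstacle.} The hardest step is the simulation in case (b). Both the $\IDcom$ seen by $\mathcal{A}$ and any previously accepted proofs must be produced without knowing $\REV$. The commitment is available through $\mathcal{O}_{\mathsf{pub}}$. Prior proofs, however, would have to be generated by the zero-knowledge simulator on public inputs $(\IDcom,\TxHash,\domain,\target,\rpcom)$. The values $\target$ can be obtained from $\mathcal{O}_{\mathsf{derive}}$. The values $\rpcom$, though, depend on $\REV$ via (C3)/(C4). I would first try to treat them as random oracle outputs, programming the oracle consistently and appealing to sponge indifferentiability. Failing that, I would add a zero-knowledge hybrid term that is folded into $\negl(\lambda)$.

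A second concern is that extraction after a simulated proof may require simulation-extractability. If it does, I would restrict the theorem to the passive setting in which the observed proofs are generated independently of the challenge transcript, matching the scope of Remark~\ref{rem:cma-equivalence}.
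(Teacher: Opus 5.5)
Your proposal follows the paper's proof essentially step for step. You extract a witness via knowledge soundness (paying $\Adv^{\mathsf{ks}}_{\mathcal{B}_1}(\lambda)$), then use (C1) to split into two cases: either the extracted commitment preimage differs from the challenger's, giving a collision for $\mathcal{B}_2$, or it contains the true $\REV$, giving a win for $\mathcal{B}_3$, which samples $\salt$, obtains $\IDcom$ from $\mathcal{O}_{\mathsf{pub}}$, and forwards $\mathcal{A}$'s oracle queries; a union bound finishes the argument. Beyond that, you include $\domain^*$ in the case split, invoke the unambiguous encoding, and flag that simulating previously accepted proofs and their $\REV$-dependent $\rpcom$ values without $\REV$ needs a random-oracle argument (a zero-knowledge hybrid alone simulates only the proofs, not these public inputs) and possibly simulation-extractability; the paper's proof passes over this point, simply asserting that $\mathcal{B}_3$ ``faithfully simulates'' the game from $\IDcom$ and $\mathcal{O}_{\mathsf{derive}}$ outputs.
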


\begin{proof}
Suppose $\mathcal{A}$ outputs $(\pi^*, \pi^*_{\mathsf{pub}})$ that the verifier accepts. We construct reductions $\mathcal{B}_1, \mathcal{B}_2, \mathcal{B}_3$ as follows.

\medskip\noindent\textbf{Step 1: Witness extraction.}
By the knowledge-soundness property of the proof system, there exists a $\PPT$ extractor $\mathcal{E}$ such that, given $\mathcal{A}$'s accepting proof and randomness, $\mathcal{E}$ outputs a witness $w^* = (\REV^*, \salt^*, \Ctx^*, \nonce^*, \aux^*)$ satisfying all circuit constraints (C1)--(C5), except with probability at most $\Adv^{\mathsf{ks}}_{\mathcal{B}_1}(\lambda)$. We construct $\mathcal{B}_1$ that simulates the game for $\mathcal{A}$ and, upon receiving an accepting proof, invokes $\mathcal{E}$ to recover the witness.

\medskip\noindent\textbf{Step 2: Case analysis on the extracted witness.}
Assume extraction succeeds. By constraint (C1), the extracted witness satisfies:
\[
  H(\REV^* \| \salt^* \| \domain) = \IDcom = H(\REV \| \salt \| \domain).
\]
We consider two cases:

\medskip\noindent\emph{Case 1: $(\REV^*, \salt^*) \neq (\REV, \salt)$.}
In this case, $\mathcal{A}$ has produced two distinct preimages of $\IDcom$ under $H$, constituting a collision. We construct $\mathcal{B}_2$ that embeds a collision-resistance challenge into $H$ and uses $\mathcal{A}$ as a subroutine: $\mathcal{B}_2$ simulates the game honestly using the challenge hash, and when extraction yields $(\REV^*, \salt^*) \neq (\REV, \salt)$, outputs both preimages as a collision. Thus $\Adv^{\mathsf{cr}}_{\mathcal{B}_2}(\lambda) \geq \Pr[\text{Case 1}]$.

\medskip\noindent\emph{Case 2: $\REV^* = \REV$ and $\salt^* = \salt$.}
In this case, the extracted witness contains the correct $\REV$. By the oracle-access restriction in Definition~\ref{def:soundness}, $\mathcal{A}$ interacted with the \DIDP only through $\mathcal{O}_{\mathsf{pub}}$ and $\mathcal{O}_{\mathsf{derive}}$, and had no access to $\mathsf{SA}$, $\mathsf{Cred}$, or $\mathsf{Unseal}$. This environment is exactly the adversarial interface of the \DIDP recovery game; therefore, $\mathcal{A}$ producing a proof from which $\REV$ can be extracted constitutes a win in the recovery game, breaching \DIDP identity-root recovery hardness (Section~\ref{sec:prelim:didp-security}). We construct $\mathcal{B}_3$ as follows. $\mathcal{B}_3$ participates in the recovery game $\mathsf{Game}^{\mathsf{rec\text{-}didp}}$, receiving oracle access to $\mathcal{O}_{\mathsf{pub}}$ and $\mathcal{O}_{\mathsf{derive}}$ from the \DIDP challenger. $\mathcal{B}_3$ samples $\salt \xleftarrow{\$} \{0,1\}^\lambda$ and queries $\mathcal{O}_{\mathsf{pub}}(\salt, \domain)$ to obtain $\IDcom = H(\REV \| \salt \| \domain)$ without learning $\REV$. Similarly, $\mathcal{B}_3$ queries $\mathcal{O}_{\mathsf{derive}}(\Ctx)$ to obtain the target-binding value $H(\mathsf{Derive}(\REV, \Ctx))$ for any context needed to simulate public on-chain data. Using $\IDcom$ and the derived public values, $\mathcal{B}_3$ faithfully simulates the authorization soundness game for $\mathcal{A}$. When extraction succeeds and yields $\REV^* = \REV$, $\mathcal{B}_3$ outputs $\REV^*$ as its answer in the \DIDP recovery game. Thus $\Adv^{\mathsf{rec\text{-}didp}}_{\mathcal{B}_3}(\lambda) \geq \Pr[\text{Case 2}]$.

\medskip\noindent\textbf{Step 3: Combining.}
By a union bound over all failure events:
\[
  \Adv^{\mathsf{auth}}_{\mathcal{A}}(\lambda) \leq \Adv^{\mathsf{ks}}_{\mathcal{B}_1}(\lambda) + \Adv^{\mathsf{cr}}_{\mathcal{B}_2}(\lambda) + \Adv^{\mathsf{rec\text{-}didp}}_{\mathcal{B}_3}(\lambda) + \negl(\lambda). \qedhere
\]
\end{proof}

\subsection{Replay Resistance}
\label{sec:security:replay}

\begin{definition}[Replay-Resistance Game]\label{def:replay}
The game $\mathsf{Game}^{\mathsf{replay}}_{\mathcal{A}}(\lambda)$ proceeds as follows:
\begin{enumerate}[nosep,leftmargin=1.5em]
  \item \textbf{Setup.} As in Definition~\ref{def:soundness}. The honest prover produces one valid authorization tuple $(\pi_1, \pi_{1,\mathsf{pub}})$ for a transaction with hash $\TxHash_1$, which is accepted and applied to the verifier state (nonce incremented or nullifier inserted).
  \item \textbf{Oracle access.} $\mathcal{A}$ receives $(\pi_1, \pi_{1,\mathsf{pub}})$ and full access to on-chain state.
  \item \textbf{Win condition.} $\mathcal{A}$ wins if it causes acceptance of a second authorization tuple $(\pi^*, \pi^*_{\mathsf{pub}})$ where $\pi^*_{\mathsf{pub}}$ includes $\IDcom$ and the same logical authorization event (i.e., the tuple violates the protocol's replay-prevention invariant).
\end{enumerate}
\end{definition}

\begin{theorem}[Replay Resistance]\label{thm:replay}
Under the assumptions of Theorem~\ref{thm:soundness} and correct verifier enforcement of the replay predicate, for any $\PPT$ adversary $\mathcal{A}$:
\[
  \Adv^{\mathsf{replay}}_{\mathcal{A}}(\lambda) \leq \Adv^{\mathsf{auth}}_{\mathcal{A}}(\lambda) + \negl(\lambda).
\]
\end{theorem}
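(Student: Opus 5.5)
The plan is a case split on how the adversary's second accepted tuple $(\pi^*, \pi^*_{\mathsf{pub}})$ relates to the honest tuple $(\pi_1, \pi_{1,\mathsf{pub}})$. Correct verifier enforcement of the replay predicate rules out accepting a tuple whose $\rpcom$ equals $\pi_{1,\mathsf{pub}}.\rpcom$. In nullifier mode, the nullifier is already in the spent set. In nonce mode, the committed nonce is no longer fresh. So any winning tuple must carry $\rpcom^* \neq \rpcom_1$ while still encoding ``the same logical authorization event''. We make this event notion precise as the same underlying $(\REV, \Ctx, \TxHash_1, \domain, \nonce_1)$, i.e. the same $\Auth$ value.

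The first step is to dispose of verbatim or trivially mauled replays. If $\pi^*_{\mathsf{pub}} = \pi_{1,\mathsf{pub}}$, the replay check rejects deterministically, so this case contributes zero. If $\pi^*$ is accepted for a public input $\pi^*_{\mathsf{pub}} \neq \pi_{1,\mathsf{pub}}$ with $\pi^* = \pi_1$, public-input binding of the proof system bounds this by $\negl(\lambda)$.

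The second step covers fresh proofs. We build a reduction $\mathcal{B}$ that plays $\mathsf{Game}^{\mathsf{auth}}$. It simulates the replay game for $\mathcal{A}$ by treating $(\pi_1, \pi_{1,\mathsf{pub}})$ as previously accepted public on-chain data, which Definition~\ref{def:soundness} explicitly lets the adversary observe. It then forwards $\mathcal{A}$'s output $(\pi^*, \pi^*_{\mathsf{pub}})$. Since that tuple verifies, contains $\IDcom$, and passes the replay predicate against the post-$\pi_1$ state, it is a winning output in the soundness game. Hence $\Pr[\text{this case}] \le \Adv^{\mathsf{auth}}(\lambda)$. We will note that the honest first proof can be simulated by the zero-knowledge simulator where needed, or simply supplied as observed data, as the definition allows.

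As a sanity check, we will also run through extraction. If $\mathcal{A}$ did produce a witness reusing $\nonce_1$ and $\TxHash_1$ under the same $\REV$, constraints (C3)--(C4) would force $\rpcom^* = \rpcom_1$, since both are determined by $H$ of identical inputs. A differing $\rpcom^*$ with the same event therefore yields either an $H$-collision or a different witness. This is absorbed by the $\Adv^{\mathsf{cr}}$ and $\Adv^{\mathsf{ks}}$ terms already inside $\Adv^{\mathsf{auth}}$, or it gives an extra $\negl(\lambda)$.

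The main obstacle is semantic. ``Same logical authorization event'' is informal, and the soundness game allows only passive observation of accepted proofs. That is acceptable here because $\pi_1$ is observed rather than chosen adaptively, but the definition must be pinned down so that a distinct-$\rpcom$ tuple for the same event is exactly a case handled by the (C3)/(C4) determinism argument. In nonce mode with opaque $\rpcom$, we will state explicitly that uniqueness enforcement reduces to nullifier semantics, as the paper notes, so the same argument applies.
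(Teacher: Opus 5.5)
Your proposal is correct, but its load-bearing step differs from the paper's.

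\textbf{The paper's route.} The paper argues model by model. In the public-nonce sub-model, direct replays and stale nonces fail the monotonicity check. The opaque-nonce sub-model is folded into nullifier semantics. In the nullifier model, any proof for the same $(\REV,\Ctx,\TxHash,\domain,\nonce)$ yields the same $\Auth$ and hence the same nullifier. The paper then asserts that the only remaining route, a fresh token, ``requires a valid witness'' and so reduces to authorization soundness.

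\textbf{Your route.} Your step 2 is a direct embedding instead. Instantiate the observable history of $\mathsf{Game}^{\mathsf{auth}}$ with the honest tuple $(\pi_1,\pi_{1,\mathsf{pub}})$, with $\TxHash_1$ fixed non-adaptively. The replay adversary's view is then a restriction of the auth adversary's view; the extra oracles can simply be ignored. Any winning replay output is accepted against the post-$\pi_1$ state and contains $\IDcom$, so it is literally a $\mathsf{Game}^{\mathsf{auth}}$ win.

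This gives the inequality with no loss and with the same adversary on both sides, which is exactly the form of the statement. It also spells out the reduction the paper only asserts.

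It further shows that your ``main obstacle'' is not one. The inequality holds for every accepted tuple, so the informal ``same logical event'' clause never needs to be pinned down.

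What the paper's case analysis buys is an explanation of why correct enforcement of the replay predicate matters. The predicate itself blocks reuse of honestly derived tokens, so what $\Adv^{\mathsf{auth}}$ must absorb is genuine forgery rather than legitimate reuse. In your write-up this role is played by the (C3)/(C4) determinism check, which is fine as commentary.

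\textbf{Things to trim.}

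\emph{Public-input binding is not needed.} The case $\pi^*=\pi_1$, $\pi^*_{\mathsf{pub}}\neq\pi_{1,\mathsf{pub}}$ does not require it, and it is not among the hypotheses of Theorem~\ref{thm:soundness}. Such an accepted tuple is already a $\mathsf{Game}^{\mathsf{auth}}$ win under your embedding.

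\emph{Drop the zero-knowledge simulator option for $\pi_1$.} An auth-game adversary does not control the setup behind a challenger-generated $\vk$, so it cannot in general run the simulator. In nullifier mode it also cannot compute $\rpcom_1=H(\Auth_1\|\domain)$ without $\REV$. Mixing simulated proofs with an adversary that outputs its own proof would call for simulation-soundness-type assumptions, which Remark~\ref{rem:cma-equivalence} explicitly lists as additional to the current ones. Supplying $\pi_1$ as observed data, via the passive-observation clause of Definition~\ref{def:soundness}, is the version consistent with the stated hypotheses.

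\emph{Fix the phrasing of the sanity check.} Its reference to $\Adv^{\mathsf{cr}}$ and $\Adv^{\mathsf{ks}}$ terms ``inside'' $\Adv^{\mathsf{auth}}$ conflates the advantage with its upper bound from Theorem~\ref{thm:soundness}. That step is unnecessary anyway.
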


\begin{proof}
We consider each replay model:

\medskip\noindent\textbf{Nonce model.}
As noted in Section~\ref{sec:circuit:constraints} (C4), the nonce commitment $\rpcom = H(\IDcom \| \nonce)$ admits two sub-models depending on whether $\nonce$ is a public or private witness:

\medskip\noindent\emph{Sub-model A: public nonce.} The nonce is included as a public input; the verifier reads its value directly and enforces strict monotonicity ($\nonce = n+1$ after an authorization at nonce $n$).
\begin{itemize}[nosep,leftmargin=1.5em]
  \item \emph{Direct replay:} Resubmitting $(\pi_1, \pi_{1,\mathsf{pub}})$ presents nonce $n$. The verifier rejects because the expected nonce is $n+1$.
  \item \emph{New proof with stale nonce:} Any new proof exposing nonce $\leq n$ is rejected by the monotonicity check.
  \item \emph{New proof with fresh nonce:} Producing a valid proof for nonce $n+1$ requires a valid witness satisfying (C1)--(C5), which reduces to the authorization soundness game (Theorem~\ref{thm:soundness}).
\end{itemize}

\medskip\noindent\emph{Sub-model B: opaque nonce commitment.} The nonce is a private witness; the chain observes only $\rpcom$. Because $\nonce$ is not visible on-chain, strict monotonicity cannot be enforced. The verifier instead treats $\rpcom$ as a single-use token and rejects any $\rpcom$ that has appeared before, reducing to nullifier-set semantics (Sub-model B is analyzed under the Nullifier model below).

\medskip\noindent\textbf{Nullifier model.}
After the honest authorization, the nullifier $\rpcom_1 = H(\Auth_1 \| \domain)$ is inserted into the nullifier set $\mathcal{N}$.
\begin{itemize}[nosep,leftmargin=1.5em]
  \item \emph{Direct replay:} Resubmitting $(\pi_1, \pi_{1,\mathsf{pub}})$ produces the same $\rpcom_1 \in \mathcal{N}$, which is rejected.
  \item \emph{Same authorization, different proof:} Any proof for the same $(\REV, \Ctx, \TxHash, \domain, \nonce)$ produces the same $\Auth_1$ and hence the same nullifier, which is rejected.
  \item \emph{Different authorization parameters:} Changing any component of the $\Auth$ computation yields a different $\Auth'$ and a fresh nullifier $\rpcom' \notin \mathcal{N}$. This represents a genuinely new authorization event (different transaction, context, or nonce) and is accepted---correctly, as it is not a replay.
\end{itemize}

In both models, the only way for $\mathcal{A}$ to cause acceptance of a replayed authorization event is to produce a valid proof with a fresh replay token, which requires knowledge of a valid witness. This reduces to authorization soundness.
\end{proof}

\subsection{Substitution Resistance}
\label{sec:security:substitution}

\begin{definition}[Substitution-Resistance Game]\label{def:substitution}
The game $\mathsf{Game}^{\mathsf{subst}}_{\mathcal{A}}(\lambda)$ proceeds as follows:
\begin{enumerate}[nosep,leftmargin=1.5em]
  \item \textbf{Setup.} As in Definition~\ref{def:soundness}. An honest authorization tuple $(\pi, \pi_{\mathsf{pub}})$ is generated for public inputs $(\IDcom, \TxHash, \domain, \target, \rpcom)$.
  \item \textbf{Oracle access.} $\mathcal{A}$ receives $(\pi, \pi_{\mathsf{pub}})$.
  \item \textbf{Win condition.} $\mathcal{A}$ wins if the verifier accepts $(\pi, \pi'_{\mathsf{pub}})$ where $\pi'_{\mathsf{pub}} \neq \pi_{\mathsf{pub}}$, i.e., the same proof is accepted with modified public inputs (different $\TxHash$, $\target$, or $\domain$).
\end{enumerate}
\end{definition}

\begin{theorem}[Substitution Resistance]\label{thm:substitution}
Under the public-input binding property of the proof system, for any $\PPT$ adversary $\mathcal{A}$:
\[
  \Adv^{\mathsf{subst}}_{\mathcal{A}}(\lambda) \leq \Adv^{\mathsf{pib}}_{\mathcal{B}}(\lambda) + \negl(\lambda).
\]
\end{theorem}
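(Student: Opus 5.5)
The plan is a direct reduction to the public-input binding property of Section~\ref{sec:prelim:zk}. Given a $\PPT$ adversary $\mathcal{A}$ against $\mathsf{Game}^{\mathsf{subst}}_{\mathcal{A}}(\lambda)$, I would build $\mathcal{B}$ against public-input binding. $\mathcal{B}$ receives $\vk$ and plays the challenger's role. It samples $\REV \xleftarrow{\$} \{0,1\}^{256}$ and $\salt \xleftarrow{\$} \{0,1\}^\lambda$, and computes $\IDcom$ as in Eq.~\eqref{eq:idcom}. It then runs the honest prover on some transaction to obtain $(\pi, \pi_{\mathsf{pub}})$. Because $\mathcal{B}$ knows the full witness $(\REV, \salt, \Ctx, \nonce, \aux)$, this simulation is perfect and needs no extraction or oracle programming. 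It hands $(\pi, \pi_{\mathsf{pub}})$ to $\mathcal{A}$ together with the rest of the public on-chain view, and receives $\pi'_{\mathsf{pub}}$. Finally $\mathcal{B}$ outputs the triple $(\pi, x = \pi_{\mathsf{pub}}, x' = \pi'_{\mathsf{pub}})$.

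The analysis then runs in two steps. First, by completeness, $\mathsf{ZKVerify}(\vk, \pi, \pi_{\mathsf{pub}}) = 1$ holds except with negligible completeness error, which I absorb into the $\negl(\lambda)$ term. Second, $\mathcal{A}$'s win condition requires the verifier to accept $(\pi, \pi'_{\mathsf{pub}})$. In particular step~8 of Algorithm~\ref{alg:authflow} passes, so $\mathsf{ZKVerify}(\vk, \pi, \pi'_{\mathsf{pub}}) = 1$, and the win condition also gives $\pi'_{\mathsf{pub}} \neq \pi_{\mathsf{pub}}$. This is exactly the bad event in the public-input binding definition. Hence $\Pr[\mathcal{A}\text{ wins}] \leq \Adv^{\mathsf{pib}}_{\mathcal{B}}(\lambda) + \negl(\lambda)$.

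Several remarks would round out the proof. The other verifier checks (TxHash recomputation, registered $\IDcom$, the replay predicate) only make acceptance harder, so they can be ignored in the bound. The other binding checks explain why an attacker cannot simply forge a fresh $\pi$ for a new $\TxHash$. That attack is outside this game, which fixes the proof object, and it is instead covered by Theorem~\ref{thm:soundness}. The same distinction covers a re-randomized proof: if $\mathcal{A}$ modifies $\pi$, that is a new-proof forgery also handled by Theorem~\ref{thm:soundness}, which is why Definition~\ref{def:substitution} keeps $\pi$ fixed.

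The main obstacle is the mismatch in the order of quantifiers. The public-input binding assumption is stated for an adversary who outputs arbitrary $(\pi, x, x')$. Here, however, $\pi$ and $x$ come from an honest prover and only $x'$ is adversarial. I would resolve this by noting that $\mathcal{B}$ generates $\pi$ itself, so $\mathcal{B}$ is a legitimate binding adversary, and the weaker setting only helps. The remaining subtlety is whether binding must hold with respect to the same $\vk$ and any simulated setup. For Groth16 I would take the CRS from the binding challenger and use the honest proving key, relying on the common-reference-string model. For STARKs the setup is transparent, so there is nothing to simulate, and Fiat--Shamir hashing of the public inputs into the transcript gives binding directly.
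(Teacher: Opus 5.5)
Your reduction is the same as the paper's: $\mathcal{B}$ simulates the substitution game and, when $\mathcal{A}$ succeeds, outputs $(\pi, \pi_{\mathsf{pub}}, \pi'_{\mathsf{pub}})$ as a violation of public-input binding, so the proposal is correct. You also invoke completeness explicitly to guarantee $\mathsf{ZKVerify}(\vk, \pi, \pi_{\mathsf{pub}}) = 1$, absorbing its error into the $\negl(\lambda)$ term, which makes precise a step the paper leaves implicit.
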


\begin{proof}
Suppose $\mathcal{A}$ outputs $\pi'_{\mathsf{pub}} \neq \pi_{\mathsf{pub}}$ such that $\mathsf{ZKVerify}(\vk, \pi, \pi'_{\mathsf{pub}}) = 1$. Then the same proof $\pi$ verifies under two distinct public input tuples. We construct $\mathcal{B}$ that embeds this as a public-input binding challenge: $\mathcal{B}$ simulates the substitution game for $\mathcal{A}$, and upon $\mathcal{A}$'s success, outputs $(\pi, \pi_{\mathsf{pub}}, \pi'_{\mathsf{pub}})$ as a violation of public-input binding.

By the public-input binding property (Section~\ref{sec:prelim:zk}), this occurs with at most negligible probability, giving:
\[
  \Adv^{\mathsf{subst}}_{\mathcal{A}}(\lambda) \leq \Adv^{\mathsf{pib}}_{\mathcal{B}}(\lambda) + \negl(\lambda). \qedhere
\]
\end{proof}

\subsection{Cross-Domain Separation}
\label{sec:security:domain}

\begin{definition}[Cross-Domain Separation Game]\label{def:domain}
The game $\mathsf{Game}^{\mathsf{domain}}_{\mathcal{A}}(\lambda)$ proceeds as follows:
\begin{enumerate}[nosep,leftmargin=1.5em]
  \item \textbf{Setup.} The challenger registers the same identity on two distinct domains $\domain_1 \neq \domain_2$ with respective commitments $\IDcom^{(1)} = H(\REV \| \salt_1 \| \domain_1)$ and $\IDcom^{(2)} = H(\REV \| \salt_2 \| \domain_2)$, where $\salt_1, \salt_2$ are independently sampled. An honest authorization proof $(\pi, \pi_{\mathsf{pub}})$ is generated for domain $\domain_1$.
  \item \textbf{Win condition.} $\mathcal{A}$ wins if it causes acceptance of an authorization tuple on domain $\domain_2$ using material from the domain-$\domain_1$ authorization (either by proof reuse or derivation).
\end{enumerate}
\end{definition}

\begin{theorem}[Cross-Domain Separation]\label{thm:domain}
Under collision resistance of $H$ and public-input binding of the proof system, for any $\PPT$ adversary $\mathcal{A}$:
\[
  \Adv^{\mathsf{domain}}_{\mathcal{A}}(\lambda) \leq \Adv^{\mathsf{pib}}_{\mathcal{B}_1}(\lambda) + \Adv^{\mathsf{cr}}_{\mathcal{B}_2}(\lambda) + \negl(\lambda).
\]
\end{theorem}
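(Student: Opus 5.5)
The proof will split on how the domain-$\domain_2$ tuple $(\pi^*, \pi^*_{\mathsf{pub}})$ accepted in the game relates to the honest domain-$\domain_1$ tuple $(\pi, \pi_{\mathsf{pub}})$. By the public-input validation step of Algorithm~\ref{alg:authflow}, acceptance on domain $\domain_2$ requires $\pi^*_{\mathsf{pub}}.\domain = \domain_2$ and $\pi^*_{\mathsf{pub}}.\IDcom = \IDcom^{(2)}$. Since $\domain_1 \neq \domain_2$, this gives $\pi^*_{\mathsf{pub}} \neq \pi_{\mathsf{pub}}$. I will treat the two ways of ``using material from the domain-$\domain_1$ authorization'' separately: direct proof reuse ($\pi^* = \pi$) and derivation of a new proof ($\pi^* \neq \pi$).

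\emph{Case A (proof reuse).} If $\pi^* = \pi$, then the single proof $\pi$ verifies under two distinct public-input tuples, $\pi_{\mathsf{pub}}$ and $\pi^*_{\mathsf{pub}}$. The argument follows the proof of Theorem~\ref{thm:substitution} verbatim. $\mathcal{B}_1$ samples $\REV, \salt_1, \salt_2$ itself, simulates the game, and outputs $(\pi, \pi_{\mathsf{pub}}, \pi^*_{\mathsf{pub}})$ as a public-input-binding break. This contributes $\Adv^{\mathsf{pib}}_{\mathcal{B}_1}(\lambda)$ to the bound.

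\emph{Case B (fresh proof derived from domain-$\domain_1$ material).} Here $\pi^*$ is a new accepting proof for public inputs carrying $\domain_2$. I will invoke knowledge soundness to extract a witness $(\REV^*, \salt^*, \Ctx^*, \nonce^*, \aux^*)$ and use the domain-carrying constraints. By (C1), $H(\REV^* \| \salt^* \| \domain_2) = \IDcom^{(2)} = H(\REV \| \salt_2 \| \domain_2)$. By (C5), $\Ctx^*.\mathsf{Domain} = \domain_2$.

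If $(\REV^*, \salt^*) \neq (\REV, \salt_2)$, then $\mathcal{B}_2$ outputs a collision. If equality holds, the adversary has effectively produced a domain-$\domain_2$ authorization from knowledge of the domain-$\domain_2$ identity root, and that is exactly an authorization-soundness break. It must be charged to the $\negl(\lambda)$ term using Theorem~\ref{thm:soundness} (knowledge soundness plus recovery hardness).

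It remains to argue that domain-$\domain_1$ artifacts cannot be transplanted. The values $\IDcom^{(1)}$, $\target$ (via (C5) and (C2)) and $\Auth$ and $\rpcom$ (via (C3)/(C4)) all hash in the domain tag. So any domain-$\domain_2$ public input equal to a domain-$\domain_1$ value yields two distinct preimages under $H$: the encodings differ because $\domain_1 \neq \domain_2$ and the concatenation $\|$ parses unambiguously. $\mathcal{B}_2$ outputs that collision.

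The main obstacle is Case B, because the stated bound lists only $\Adv^{\mathsf{pib}}$ and $\Adv^{\mathsf{cr}}$. The extraction and recovery-hardness terms from Theorem~\ref{thm:soundness} must therefore be absorbed into $\negl(\lambda)$, as is done in Theorem~\ref{thm:replay}. I will make this explicit: under the standing assumptions of Theorem~\ref{thm:soundness}, those terms are negligible. I will also need to check that the simulation of the two-domain setup (two independently salted commitments to the same $\REV$) stays within the oracle interface of the recovery game. It does, because $\mathcal{B}_3$ can query $\mathcal{O}_{\mathsf{pub}}(\salt_1, \domain_1)$ and $\mathcal{O}_{\mathsf{pub}}(\salt_2, \domain_2)$, and it obtains the honest domain-$\domain_1$ proof by simulating it with the zero-knowledge simulator. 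A union bound over Cases A and B then gives the claimed inequality.
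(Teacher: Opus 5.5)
Your Case A and your closing ``transplant'' paragraph reproduce the paper's proof. The paper considers only two strategies: reusing $\pi$ on $\domain_2$, excluded by public-input binding because $(\IDcom^{(1)},\domain_1)\neq(\IDcom^{(2)},\domain_2)$, and finding $(\REV',\salt')$ with $H(\REV'\|\salt'\|\domain_2)=\IDcom^{(1)}$, which is a collision. It adds the remark that (C5) puts $\domain$ into every binding. It never treats a fresh domain-$\domain_2$ proof; such forgeries are tacitly left to Theorem~\ref{thm:soundness}, which is why its bound mentions only $\Adv^{\mathsf{pib}}$ and $\Adv^{\mathsf{cr}}$.

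Your Case B is where you depart from the paper, and it does not go through under the theorem's hypotheses. Extraction needs knowledge soundness, and the branch $(\REV^*,\salt^*)=(\REV,\salt_2)$ needs \DIDP recovery hardness; neither is assumed. So calling $\Adv^{\mathsf{ks}}+\Adv^{\mathsf{rec\text{-}didp}}$ ``$\negl(\lambda)$'' is not licensed. Theorem~\ref{thm:replay} is not a precedent: it assumes everything in Theorem~\ref{thm:soundness} and keeps $\Adv^{\mathsf{auth}}$ explicit in its bound.

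In fact, if fresh domain-$\domain_2$ proofs count as wins, no argument from collision resistance and public-input binding alone can succeed. A ``proof system'' whose proof is just the public-input vector is perfectly input-binding but has no soundness. You must either adopt the paper's narrow reading and drop Case B, or keep Case B and restate the result with the hypotheses and terms of Theorem~\ref{thm:soundness} added, as Corollary~\ref{cor:combined} effectively does.

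Even granting those extra assumptions, the recovery branch of Case B has two holes.
\begin{itemize}
\item $\mathcal{B}_3$ does not know $\REV$, so the honest domain-$\domain_1$ proof in $\mathcal{A}$'s view must be simulated. Extracting from $\mathcal{A}$'s $\pi^*$ in the presence of a simulated proof needs simulation-extractability, not the plain knowledge soundness of Section~\ref{sec:prelim:zk}. The combined prover ``$\mathcal{A}$ plus simulator'' uses the trapdoor or programs the random oracle, which plain knowledge soundness does not cover; this is the non-malleability caveat of Remark~\ref{rem:cma-equivalence}. Your collision branch does not have this problem, since $\mathcal{B}_2$ samples $\REV$ itself and proves honestly.
\item In nullifier mode, the honest public input $\rpcom=H(\Auth\|\domain_1)$ depends on $\REV$ through $\Auth$. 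It is not returned by $\mathcal{O}_{\mathsf{pub}}$ or $\mathcal{O}_{\mathsf{derive}}$, so your claim that the two-domain simulation stays inside the recovery game's interface needs an additional hybrid, for example replacing $\rpcom$ by a random value in the random-oracle model.
\end{itemize}

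Finally, your transplant paragraph is redundant once Case B extracts, and it overreaches. $\TxHash=H(\mathsf{tx})$ need not contain the domain tag, so equal transaction hashes across domains are not a collision.
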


\begin{proof}
We consider two attack strategies:

\medskip\noindent\emph{Proof reuse.}
If $\mathcal{A}$ submits the original proof $\pi$ (generated for domain $\domain_1$) to the domain-$\domain_2$ verifier, the public inputs must include $\IDcom^{(2)}$ and $\domain_2$. Since $\pi$ was generated for public inputs containing $\IDcom^{(1)}$ and $\domain_1$, and $(\IDcom^{(1)}, \domain_1) \neq (\IDcom^{(2)}, \domain_2)$, verification fails by public-input binding.

\medskip\noindent\emph{Commitment collision.}
If $\mathcal{A}$ attempts to find $(\REV', \salt', \domain_2)$ such that $H(\REV' \| \salt' \| \domain_2) = \IDcom^{(1)}$, this yields two distinct domain-tagged inputs mapping to the same hash value and therefore constitutes a collision event under $H$.

\medskip\noindent
Moreover, by constraint (C5), $\domain$ is included in all circuit bindings. A proof generated with $\domain = \domain_1$ cannot satisfy the verifier's constraint checks when the declared domain is $\domain_2$, as the commitment, authorization token, and replay-prevention commitment all incorporate $\domain$. Combining: $\Adv^{\mathsf{domain}}_{\mathcal{A}}(\lambda) \leq \Adv^{\mathsf{pib}}_{\mathcal{B}_1}(\lambda) + \Adv^{\mathsf{cr}}_{\mathcal{B}_2}(\lambda) + \negl(\lambda)$.
\end{proof}

\subsection{Composition}
\label{sec:security:composition}

\begin{corollary}[Combined Security]\label{cor:combined}
Under the assumptions of Theorems~\ref{thm:soundness}--\ref{thm:domain}, \ZKACE simultaneously satisfies authorization soundness, replay resistance, substitution resistance, and cross-domain separation against $\PPT$ adversaries. Specifically, for any $\PPT$ adversary $\mathcal{A}$ attacking any single property, there exist $\PPT$ reductions $\mathcal{B}_1, \ldots, \mathcal{B}_4$ such that:
\[
  \Adv^{\mathsf{combined}}_{\mathcal{A}}(\lambda) \leq \Adv^{\mathsf{ks}}_{\mathcal{B}_1}(\lambda) + 2 \cdot \Adv^{\mathsf{cr}}_{\mathcal{B}_2}(\lambda) + \Adv^{\mathsf{rec\text{-}didp}}_{\mathcal{B}_3}(\lambda) + 2 \cdot \Adv^{\mathsf{pib}}_{\mathcal{B}_4}(\lambda) + \negl(\lambda),
\]
where the factor of $2$ on $\Adv^{\mathsf{cr}}$ arises because both the authorization soundness proof (Theorem~\ref{thm:soundness}, Case~1) and the cross-domain separation proof (Theorem~\ref{thm:domain}) each reduce to collision resistance, and the factor of $2$ on $\Adv^{\mathsf{pib}}$ arises because both the substitution resistance proof (Theorem~\ref{thm:substitution}) and the cross-domain separation proof each reduce to public-input binding. The bound is negligible under the stated assumptions.
\end{corollary}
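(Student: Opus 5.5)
The plan is to derive the combined bound directly from the four individual theorems, applying each property's reduction and then aggregating via a union bound. I will define the combined experiment $\mathsf{Game}^{\mathsf{combined}}_{\mathcal{A}}$ so that $\mathcal{A}$ wins if it wins any one of $\mathsf{Game}^{\mathsf{auth}}$, $\mathsf{Game}^{\mathsf{replay}}$, $\mathsf{Game}^{\mathsf{subst}}$, or $\mathsf{Game}^{\mathsf{domain}}$, all run over the same setup (same $\REV$, salts, $\vk$, and honestly generated proofs). The union bound then gives $\Adv^{\mathsf{combined}}_{\mathcal{A}} \leq \Adv^{\mathsf{auth}} + \Adv^{\mathsf{replay}} + \Adv^{\mathsf{subst}} + \Adv^{\mathsf{domain}}$, where each term is taken against the corresponding projection of $\mathcal{A}$.

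Next, I substitute the bounds term by term. Theorem~\ref{thm:soundness} contributes $\Adv^{\mathsf{ks}} + \Adv^{\mathsf{cr}} + \Adv^{\mathsf{rec\text{-}didp}}$. Theorem~\ref{thm:substitution} contributes one $\Adv^{\mathsf{pib}}$, and Theorem~\ref{thm:domain} contributes one $\Adv^{\mathsf{pib}}$ and one $\Adv^{\mathsf{cr}}$. To obtain the stated coefficients, the replay term must not introduce another copy of the soundness bound. For this I will argue that the replay event and the soundness event can be merged rather than added. By the proof of Theorem~\ref{thm:replay}, every replay win requires acceptance of a proof carrying a fresh replay token for $\IDcom$. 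That acceptance is itself a winning output in $\mathsf{Game}^{\mathsf{auth}}$: it verifies and passes the replay predicate. So the event ``replay win'' is contained in the event ``auth win'' up to negligible slack. The union should therefore be taken over three events, not four, giving the claimed bound.

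Finally, I fix a single reduction per assumption, $\mathcal{B}_1,\dots,\mathcal{B}_4$. Each one is built by running $\mathcal{A}$ and then dispatching to the appropriate sub-reduction, choosing uniformly or by which event occurred. When advantages against different sub-reductions for the same assumption appear, I bound each by the supremum over PPT adversaries. This justifies writing $2\cdot\Adv^{\mathsf{cr}}_{\mathcal{B}_2}$ and $2\cdot\Adv^{\mathsf{pib}}_{\mathcal{B}_4}$ with a single named reduction, and all leftover slack terms are absorbed into $\negl(\lambda)$.

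The main obstacle is the replay step. It is the one place where a naive union bound gives the wrong coefficients: it would produce $2\cdot\Adv^{\mathsf{ks}}$ and $2\cdot\Adv^{\mathsf{rec\text{-}didp}}$. I will check carefully that the game-level containment holds in a shared-setup experiment. In particular, the replay adversary has seen an honest proof $\pi_1$, and I must confirm that the soundness game's passive-observation interface (Remark~\ref{rem:cma-equivalence}) already permits this. If that containment fails, I would fall back to stating the corollary with a $2\cdot\Adv^{\mathsf{ks}}$ and $2\cdot\Adv^{\mathsf{rec\text{-}didp}}$ coefficient, which is still negligible.
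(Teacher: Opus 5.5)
Your accounting matches the paper's. The paper gives no separate proof. Its only justification is the bookkeeping inside the statement, and its coefficients are exactly the sum of the bounds of Theorems~\ref{thm:soundness}, \ref{thm:substitution} and~\ref{thm:domain}, with the replay term absorbed through Theorem~\ref{thm:replay}.

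Your replay containment is right. An accepted replay tuple for $\IDcom$ already meets the authorization win condition, and $\pi_1$ is a previously accepted proof, which Definition~\ref{def:soundness} lets $\mathcal{A}$ observe. Your coin-flip dispatch is the correct way to get one named reduction per assumption, because it gives $\Adv^{\mathsf{cr}}_{\mathcal{B}'} + \Adv^{\mathsf{cr}}_{\mathcal{B}''} = 2\cdot\Adv^{\mathsf{cr}}_{\mathcal{B}_2}$ exactly. Drop the ``supremum over PPT adversaries'' phrasing, since that pointwise supremum need not be negligible in the asymptotic setting.

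The step that fails as written is the shared-setup union game. Definitions~\ref{def:substitution} and~\ref{def:domain} only \emph{generate} an honest tuple; they never apply it to the verifier state. In your common setup (same $\REV$ and salt), the substitution game's honest tuple carries the same $\IDcom$. If it sits unconsumed, $\mathcal{A}$ wins the authorization component with probability $1-\negl(\lambda)$ just by resubmitting it: it verifies by completeness and its replay token is fresh. The projection of $\mathcal{A}$ onto $\mathsf{Game}^{\mathsf{auth}}$ is then not a legitimate authorization adversary. That game hands out only previously accepted proofs, and a soundness adversary cannot produce fresh honest proofs without $\REV$. So $\Adv^{\mathsf{combined}}_{\mathcal{A}} \le \Adv^{\mathsf{auth}} + \cdots$ is false for your game. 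The $\pi_1$ case you flagged is the harmless one, because its token is already consumed.

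There are two ways to repair this:
\begin{enumerate}
\item Require every honest tuple in the common setup to be accepted, with its replay token consumed, before $\mathcal{A}$ runs, so that all of them fall under the authorization game's passive-observation interface.
\item Follow the corollary's literal quantifier (``attacking any single property''). Then no union is needed: $\Adv^{\mathsf{combined}}_{\mathcal{A}}$ is the advantage in one game, and each individual bound is already dominated by the right-hand side, including $\Adv^{\mathsf{replay}}_{\mathcal{A}} \le \Adv^{\mathsf{auth}}_{\mathcal{A}} + \negl(\lambda)$. Your ``main obstacle'' and the fallback coefficients never arise.
\end{enumerate}
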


\begin{proposition}[Completeness]\label{prop:completeness}
If the honest prover holds a valid witness $w = (\REV, \salt, \Ctx, \nonce, \aux)$ satisfying all constraints (C1)--(C5), and the replay-prevention state permits the submitted token (i.e., nonce is fresh or nullifier is unspent), then the verifier accepts the resulting proof with probability $1 - \negl(\lambda)$.
\end{proposition}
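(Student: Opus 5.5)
The plan is to walk through the verifier side of Algorithm~\ref{alg:authflow} one check at a time. For each of the four checks, we show that an honestly generated tuple $(\mathsf{tx}, \pi, \pi_{\mathsf{pub}})$ passes, except possibly with negligible probability coming from the proof system itself. The honest prover computes $\pi_{\mathsf{pub}} = (\IDcom, \TxHash, \domain, \target, \rpcom)$ from the same witness $w$ used to build $\pi$. Concretely, $\TxHash = H(\mathsf{tx})$, $\target = H(\mathsf{Derive}(\REV,\Ctx))$, and $\rpcom$ is computed via (C4). The first step is therefore to check that $(w, \pi_{\mathsf{pub}})$ lies in the relation defined by (C1)--(C5); the hypothesis on $w$ gives this directly.

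Second, we discharge the first two verifier checks. These are purely syntactic.

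\begin{itemize}[nosep,leftmargin=1.5em]
  \item \emph{Context binding.} Since $H$ is deterministic, $\TxHash' = H(\mathsf{tx}) = \pi_{\mathsf{pub}}.\TxHash$.
  \item \emph{Public-input validation.} By (C1), $\pi_{\mathsf{pub}}.\IDcom = H(\REV\|\salt\|\domain)$. This is the registered commitment, since the honest identity was registered with exactly these values. The values $\domain$ and $\target$ match the transaction context by (C2) and (C5), together with the assumption that the prover uses the deployment's declared context.
\end{itemize}

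Here we will invoke the canonical derivation requirement from Section~\ref{sec:circuit:constraints}. It ensures that the $\target$ the verifier expects, computed with the external \DIDP $\mathsf{Derive}$, equals the circuit-native value.

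Third, the proof verification check $\mathsf{ZKVerify}(\vk,\pi,\pi_{\mathsf{pub}})=1$ follows from the completeness property of the proof system stated in Section~\ref{sec:prelim:zk}. The prover holds a valid witness for the statement, so the check passes. That property is stated as perfect completeness. We will nevertheless allow a $\negl(\lambda)$ slack, because some instantiations (e.g.\ FRI-based STARKs with Fiat--Shamir, or implementations with negligible abort probability) only achieve statistical completeness. This is where the $1-\negl(\lambda)$ in the statement comes from.

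Finally, the replay check passes by hypothesis: the nonce is fresh or the nullifier is unspent. The one subtlety, which I expect to be the main obstacle, is making sure the hypothesis is phrased with respect to the token actually appearing in $\pi_{\mathsf{pub}}$. In the nullifier model this token is $\rpcom = H(\Auth\|\domain)$. In the nonce model it is either $\nonce$ itself (Sub-model~A, public nonce) or $\rpcom$ treated as an opaque single-use value (Sub-model~B). I will handle the three cases separately. Each time I will confirm that the verifier's predicate is a function of public data that the hypothesis declares admissible, and that an honest proof cannot be rejected merely because an unrelated prior $\rpcom$ collides with it. Such a collision occurs with at most negligible probability under the collision resistance of $H$, which can also be absorbed into $\negl(\lambda)$. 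Combining the four checks by a union bound over these negligible failure events yields acceptance with probability $1-\negl(\lambda)$.
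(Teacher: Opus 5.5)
Your proposal is correct and takes the same route as the paper. The paper's proof likewise rests on three ingredients: proof-system completeness for $\mathsf{ZKVerify}$, the replay hypothesis for the replay check, and the fact that the honest public inputs are computed from the witness by construction; your check-by-check walk through Algorithm~\ref{alg:authflow} makes explicit what the paper compresses into three sentences, and the collision-resistance detour you expected to be the main obstacle is unnecessary, since the hypothesis already states that the submitted token (the public nonce, or $\rpcom$ as a single-use value) is admissible in the verifier's replay state.
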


\begin{proof}
By the completeness property of the underlying proof system, an honestly generated proof for a satisfying witness is accepted with overwhelming probability. The replay-prevention predicate passes by hypothesis on the freshness of the submitted nonce or nullifier. All public inputs are computed consistently with the witness by construction, so the verifier's constraint checks are satisfied.
\end{proof}

\begin{remark}[Privacy Guarantee]\label{rem:privacy}
By the zero-knowledge property of the underlying proof system, \ZKACE authorization proofs reveal \emph{no information} about the private witness $(\REV, \salt, \Ctx, \nonce, \aux)$ beyond what is implied by the public inputs $(\IDcom, \TxHash, \domain, \target, \rpcom)$. In particular:
\begin{itemize}[nosep,leftmargin=1.5em]
  \item The 256-bit Root Entropy Value $\REV$ remains hidden. Since $\REV$ is the sole identity root from which all cryptographic material is derived, its concealment ensures that no derived keys can be computed by observers.
  \item The commitment salt $\salt$ is not disclosed, preserving identity unlinkability across commitments derived with different salts.
  \item The derivation context $\Ctx$ and nonce $\nonce$ are hidden, preventing observers from inferring the target algorithm, application domain, derivation index, or replay-prevention state beyond what the public $\rpcom$ reveals.
\end{itemize}
The information revealed to observers is limited to: (i)~which identity commitment was used, (ii)~which transaction was authorized, (iii)~the domain and target context, and (iv)~the replay-prevention token. This leakage profile is inherent to any authorization system that must bind proofs to specific transactions and enforce replay prevention, and represents the minimal information disclosure required for consensus verification.
\end{remark}

\section{On-Chain Verification and Deployment}
\label{sec:deployment}

\subsection{Verification Interface}
\label{sec:deployment:interface}
At the consensus layer, \ZKACE exposes a minimal verification interface that accepts a zero-knowledge proof together with a fixed set of public inputs. Conceptually, the verifier evaluates a predicate of the form:
\[
  \mathsf{ZKVerify}(\mathsf{proof}, \mathsf{public\_inputs}) \in \{\mathsf{accept}, \mathsf{reject}\}.
\]
Crucially, the verifier does not accept the prover's declared public inputs at face value. Before evaluating the proof, the verifier must independently verify context consistency (Algorithm~\ref{alg:authflow}, steps~6--7): (i)~recompute $\TxHash$ from the submitted transaction payload and confirm it matches the value declared in $\pi_{\mathsf{pub}}$; (ii)~confirm that $\IDcom$ corresponds to a registered on-chain identity commitment; and (iii)~confirm that $\domain$ and $\target$ are consistent with the expected transaction context. Only after these checks pass does the verifier evaluate the zero-knowledge proof and, upon successful verification, perform any required state updates associated with replay prevention:
\begin{itemize}[nosep,leftmargin=1.5em]
  \item recording the consumed nonce in a per-identity nonce registry;
  \item inserting the nullifier into a global nullifier set to prevent reuse; or
  \item updating an authorization index or similar monotonic state variable.
\end{itemize}
These updates are deterministic and verifiable by all consensus participants. No identity root material ($\REV$) or derived keys are ever written to chain state.

\subsection{Compatibility with Account Abstraction}
\label{sec:deployment:aa}
\ZKACE is naturally compatible with account-abstraction (AA) architectures, including designs aligned with ERC-4337-style transaction flows~\cite{erc4337}.

\paragraph{Validator module integration.}
In an AA setting, \ZKACE can be implemented as a \emph{validator module} associated with an abstract account. Instead of checking a classical or post-quantum signature, the account validation logic invokes the \ZKACE verification interface to determine whether the transaction is authorized by the committed identity.

\paragraph{Bundler and submission flow.}
The generation of zero-knowledge proofs occurs off-chain and is performed by the entity controlling the identity, typically the user's wallet software. A bundler or relayer may collect user operations and submit them to the chain, but does not need access to any secret identity material. The bundler merely transports the proof and public inputs and has no influence over authorization correctness. This separation preserves the trust-minimized design of account abstraction while allowing proof generation and transaction submission responsibilities to be flexibly allocated.

\subsection{Instantiation Flexibility}
\label{sec:deployment:flexibility}

\paragraph{Proof system selection.}
\ZKACE may be instantiated using different classes of zero-knowledge proof systems. SNARK-based systems (e.g., Groth16~\cite{groth16}) offer the smallest proof sizes and fastest verification, at the cost of a trusted setup. Universal systems (e.g., PLONK~\cite{plonk}) avoid per-circuit trusted setup. Transparent systems (e.g., STARKs~\cite{stark}) require no trusted setup at all but produce larger proofs. IPA-based systems (e.g., Bulletproofs~\cite{bulletproofs}) offer a different trade-off between proof size and verification time. The framework does not rely on properties unique to any particular system beyond soundness, zero-knowledge, and public-input binding.

\paragraph{Deployment mode classification.}
\ZKACE supports three distinct on-chain deployment configurations, which differ in proof system and the role of off-chain aggregation:
\begin{itemize}[nosep,leftmargin=1.5em]
  \item \textbf{Groth16 (classical, direct on-chain).} 128-byte compressed proofs are submitted per transaction without aggregation. Total per-transaction on-chain data is 288\,B (128\,B proof + 160\,B public inputs). This mode offers the smallest proof footprint but is not post-quantum secure; it is the appropriate choice for EVM-native deployments that do not require quantum resistance.
  \item \textbf{STARK individual (post-quantum-oriented, off-chain only).} Individual STARK proofs (${\approx}\,$107\,KB) carry post-quantum-oriented security conditional on the instantiated hash and FRI parameters, but are far too large for per-transaction on-chain submission. This configuration is suited to off-chain audit trails or local verification workflows, not to direct on-chain use.
  \item \textbf{STARK aggregated (post-quantum-oriented, target deployment form).} Individual proofs are verified off-chain by the block builder, which submits one aggregate proof per block. Per-transaction consensus-visible data is ${\approx}\,$160\,B (five public input fields) plus an amortized share of the block-level aggregate proof. This is the intended production deployment mode for post-quantum-ready systems.
\end{itemize}
The structural data accounting in Section~\ref{sec:accounting} and the benchmarks in Section~\ref{sec:implementation} are organized around this three-mode classification. Post-quantum security claims for the STARK modes are conditional on the parameter choices discussed in Section~\ref{sec:prelim:zk}.

\paragraph{Deployment-oriented instantiation.}
A deployment-oriented \ZKACE configuration can instantiate the authorization circuit with the Circle STARK backend (Stwo~\cite{stwo}) benchmarked in Section~\ref{sec:implementation}. This choice is motivated by three considerations: (i)~STARKs require no trusted setup, eliminating a significant operational and trust burden; (ii)~STARK security relies on hash-based assumptions, making it a natural fit for post-quantum-oriented deployments when parameters provide adequate quantum security margins~\cite{boneh-quantum-ro-2011}; and (iii)~the transparent setup permits permissionless deployment without a multi-party computation ceremony. The trade-off is larger proof sizes (${\approx}\,$107\,KB vs.\ 128\,B for Groth16), which are too large for per-transaction on-chain submission and motivate the aggregation and builder-off-path models discussed below. The Groth16 benchmarks in Section~\ref{sec:implementation:benchmarks} remain valid as a lower bound on proof-system performance and serve to demonstrate circuit feasibility at minimal constraint count.

\paragraph{Hash function parameterization.}
ZK-friendly hash functions, such as Poseidon~\cite{poseidon}, are used within the circuit for commitments and bindings. Parameter choices (e.g., arity, round numbers, field size) are left to the deployment context. The security arguments of \ZKACE rely only on standard collision and preimage resistance assumptions, not on any specific parameter set.

\paragraph{Proof-system migration.}
Because identity commitments are proof-system agnostic (they depend only on the hash function, not the ZK proof system), verifier and prover upgrades do not require identity rotation. A chain can migrate from one proof system to another without invalidating existing identity commitments.

\section{Structural On-Chain Data Accounting}
\label{sec:accounting}

This section provides a structural accounting of on-chain authorization data under different authorization models. The purpose is to compare the \emph{protocol-imposed data artifacts} required by post-quantum signature-based authorization and by \ZKACE authorization, independent of concrete implementations or performance benchmarks.

\subsection{Accounting Methodology}
\label{sec:accounting:method}
The analysis is not based on experimental measurements, runtime benchmarks, or gas profiling. Instead, it relies on protocol-level accounting of the data objects that must be included on-chain for transaction authorization under each model. Specifically, we compare:
\begin{itemize}[nosep,leftmargin=1.5em]
  \item the mandatory authorization artifacts that consensus participants must receive, verify, and store; and
  \item the asymptotic size of these artifacts as dictated by cryptographic standards and protocol structure.
\end{itemize}

\subsection{Baseline: Post-Quantum Signature Artifacts}
\label{sec:accounting:baseline}
In a post-quantum signature-based authorization model, each transaction must carry a cryptographic signature object that is verified by the consensus layer. For lattice-based signature schemes, standardized parameter sets result in signature sizes on the order of several kilobytes. Table~\ref{tab:pqc-sizes} summarizes representative sizes.

\begin{table}[t]
\centering
\caption{Post-quantum signature artifact sizes for standardized schemes.}\label{tab:pqc-sizes}
\begin{tabular}{@{}lccc@{}}
\toprule
\textbf{Scheme} & \textbf{NIST Level} & \textbf{Signature Size} & \textbf{Public Key} \\
\midrule
ML-DSA-44~\cite{mldsa}  & 2 & 2,420 bytes & 1,312 bytes \\
ML-DSA-65~\cite{mldsa}  & 3 & 3,309 bytes & 1,952 bytes \\
ML-DSA-87~\cite{mldsa}  & 5 & 4,627 bytes & 2,592 bytes \\
SLH-DSA-128f~\cite{slhdsa} & 1 & 17,088 bytes & 32 bytes \\
FN-DSA-512~\cite{falcon} & 1 & ${\approx}\,666$ bytes & 897 bytes \\
\midrule
\multicolumn{2}{l}{Ed25519 (classical)} & 64 bytes & 32 bytes \\
\multicolumn{2}{l}{secp256k1 ECDSA (classical)} & ${\approx}\,71$ bytes & 33 bytes \\
\bottomrule
\end{tabular}
\end{table}

The defining characteristic is that post-quantum signature objects are \emph{kilobyte-scale}. These objects must be transmitted, verified, and permanently recorded as part of the ledger, making their size a first-order contributor to on-chain data growth.

\subsection{ZK-ACE Authorization Artifacts}
\label{sec:accounting:zkace}
Under \ZKACE, explicit post-quantum signature objects are not placed on-chain. Instead, authorization is represented by a combination of compact commitments and succinct zero-knowledge proofs. The required on-chain artifacts per transaction are:
\begin{itemize}[nosep,leftmargin=1.5em]
  \item \textbf{Identity commitment}: A hash-based commitment to the deterministic identity ($\IDcom$), typically occupying 32 bytes.
  \item \textbf{Zero-knowledge proof}: A succinct proof attesting to authorization correctness. The Groth16 backend produces 128-byte compressed proofs. The STARK backend produces ${\approx}\,$107\,KB individual proofs, which are aggregated per-block by the builder; individual STARK proofs never appear on-chain.
  \item \textbf{Public inputs}: A small set of consensus-visible values, including $\TxHash$, $\domain$, $\target$, and $\rpcom$, contributing approximately 160 bytes ($5 \times 32$\,B).
\end{itemize}
Crucially, none of these artifacts scale with the size or internal structure of post-quantum signature schemes.

\subsection{Order-of-Magnitude Comparison}
\label{sec:accounting:comparison}
Comparing the structural requirements of the two models yields a clear order-of-magnitude difference in on-chain authorization data, summarized in Table~\ref{tab:comparison}:

\begin{table}[t]
\centering
\caption{Structural comparison of per-transaction authorization data. PQC column assumes first-use public-key transmission. STARK column assumes mandatory per-block aggregation: the 160\,B figure covers public inputs only; amortized aggregation proof overhead is not included. Groth16 column uses measured compressed proof size. Neither column accounts for chain-specific encoding overhead.}\label{tab:comparison}
\begin{tabular}{@{}lccc@{}}
\toprule
\textbf{Component} & \textbf{PQC Sig Model} & \textbf{ZK-ACE (STARK)} & \textbf{ZK-ACE (Groth16)} \\
\midrule
Signature / Proof      & 2,420--4,627\,B & ${\approx}\,$0\,B (amortized) & 128\,B \\
Public key (amortized) & 1,312--2,592\,B & 0\,B & 0\,B \\
Identity commitment    & --- & 32\,B (reused) & 32\,B (reused) \\
Public inputs          & --- & 160\,B & 160\,B \\
\midrule
\textbf{Total per TX}  & \textbf{3,732--7,219\,B} & $\mathbf{{\approx}\,160}$\,\textbf{B} & \textbf{288\,B} \\
\bottomrule
\end{tabular}
\end{table}

Under the STARK aggregation model, the structural replacement implies a \emph{per-transaction on-chain data reduction of 23--45$\times$} (${\approx}\,$160\,B vs.\ 3,732--7,219\,B). Under the Groth16 model, the reduction is 13--25$\times$ (288\,B vs.\ 3,732--7,219\,B). The detailed per-scheme compression ratios are given in Table~\ref{tab:compression} (Section~\ref{sec:implementation:benchmarks}). Importantly, this reduction does not arise from compressing post-quantum signatures. Rather, it is achieved by \emph{eliminating the need to place post-quantum signature material on-chain} and replacing it with succinct proofs of authorization semantics.

\paragraph{Comparison caveats.}
Table~\ref{tab:comparison} presents a structural, protocol-level comparison intended to illustrate the order-of-magnitude difference in authorization artifact sizes. Several factors affect the precise ratio in any concrete deployment: (i)~public-key amortization depends on whether the sender's key is already known to the chain (repeat senders amortize to zero in the PQC model, narrowing the gap); (ii)~chain-specific encoding formats (RLP, SSZ, ABI encoding) add overhead to both models; and (iii)~the STARK aggregation model assumes a block builder that aggregates individual proofs, adding builder-side computation. The comparison is therefore best understood as a structural accounting of the \emph{cryptographic artifact sizes} mandated by each authorization model, rather than as a prediction of concrete gas costs or byte counts on any specific chain.

\section{Scalability Extensions}
\label{sec:extensions}

\subsection{Batch Authorization}
\label{sec:extensions:batch}
Multiple independent authorization statements from distinct identities can be combined using batched proving or, in future deployments, proof-aggregation techniques. The current reference implementation benchmarks a tiled STARK batch path whose verification time grows with batch size; recursive aggregation would be required for constant-size verification artifacts over arbitrarily many authorizations.

Batch authorization preserves the individual security guarantees of each component statement: knowledge soundness of the aggregated proof implies knowledge soundness of each constituent statement. The aggregator (which collects individual proofs and produces the batch proof) is untrusted and does not learn any private witness material.

\subsection{Recursive Proof Composition}
\label{sec:extensions:recursion}
Recursive composition can allow a top-level proof to attest to the validity of many base-level proofs~\cite{halo2,nova}. Applied to \ZKACE, a recursive proof would verify $k$ individual authorization proofs and produce a single constant-size proof regardless of $k$. This is particularly useful in rollup architectures, but it is not what the current tiled-batch benchmark measures.

Such recursion would preserve all \ZKACE security properties if the recursive proof system is knowledge-sound: the existence of an accepting recursive proof would imply the existence of valid witnesses for all constituent authorizations. Cross-domain separation is maintained because domain tags propagate through the recursion.

\section{Reference Implementation and Benchmarks}
\label{sec:implementation}

To validate the structural analysis of Sections~\ref{sec:circuit:constraints} and~\ref{sec:accounting}, we provide a reference implementation of the \ZKACE circuit with two pluggable backends, and report Criterion.rs microbenchmarks~\cite{criterion}.

\subsection{Implementation}
The reference implementation provides a pluggable backend architecture, allowing the same five-constraint circuit semantics (C1)--(C5) to be proven under fundamentally different proof systems. A compile-time feature flag selects between:

\begin{itemize}[nosep,leftmargin=1.5em]
  \item \textbf{Circle STARK backend} (default, transparent hash-based backend). Built on Stwo~\cite{stwo}, this backend operates over the Mersenne-31 (M31) field using Poseidon2 hash (width $t=16$, $\alpha=5$) and the FRI protocol. Verification relies only on hash functions---no elliptic curve assumptions---making it post-quantum-oriented under adequate hash and FRI parameters. The setup is transparent: no trusted ceremony is required. The circuit produces 341 active rows (11 Poseidon2 permutations, log\_size=9), 100 main trace columns, 34 preprocessed columns, and 361 AIR constraint expressions.
  \item \textbf{Groth16/BN254 backend} (compact classical proofs). Built on arkworks~\cite{arkworks}, this backend uses a manual Poseidon implementation (width $t=3$, $\alpha=5$, 8 full rounds, 56 partial rounds) over the BN254 scalar field with Groth16~\cite{groth16} proving. It produces the smallest possible proofs (128~bytes) but relies on elliptic curve assumptions that are not quantum-resistant. The circuit compiles to 2,155 R1CS constraints.
\end{itemize}

Both backends produce identical constraint counts across the nonce-registry and nullifier-set replay modes, confirming that the replay-mode selection does not affect circuit complexity. Each backend includes a native (off-chain) hash implementation with identical parameters, ensuring functional equivalence between the in-circuit computation and the off-chain witness generation.

\subsection{Benchmarks}
\label{sec:implementation:benchmarks}

All measurements were collected using Criterion.rs (100 samples per benchmark) on Apple Silicon (M-series), single-threaded execution, release-mode compiler optimizations (\texttt{--release}). These figures should be interpreted as reference measurements for the reported implementation and hardware; a public artifact should identify the repository URL, commit hash, Rust toolchain, benchmark commands, and machine configuration used to reproduce them.

\paragraph{Circle STARK backend.}
Table~\ref{tab:bench-stark} reports the core STARK operations for a single authorization proof. The STARK backend is faster than Groth16 in both proving and verification. This follows from the underlying field arithmetic: Mersenne-31 operations (31-bit) are an order of magnitude cheaper than BN254 scalar field operations (254-bit), and STARK proving requires no elliptic curve multi-scalar multiplications.

\begin{table}[t]
\centering
\caption{Circle STARK (Stwo) benchmarks for \ZKACE (single-threaded, Apple Silicon, Criterion.rs medians). The NonceRegistry and NullifierSet semantics share the same circuit after squeeze2 unification, so this table reports the unified replay-circuit path.}\label{tab:bench-stark}
\begin{tabular}{@{}lr@{}}
\toprule
\textbf{Operation} & \textbf{Unified path} \\
\midrule
Prove (per transaction)  & 14.5\,ms \\
Verify (per transaction) & 1.1\,ms \\
Proof size               & ${\approx}\,$107\,KB \\
\bottomrule
\end{tabular}
\end{table}

\paragraph{Groth16/BN254 backend.}
Table~\ref{tab:bench-groth16} reports the Groth16 operations.

\begin{table}[t]
\centering
\caption{Groth16/BN254 benchmarks for \ZKACE (single-threaded, Apple Silicon, Criterion.rs medians). The two replay semantics share the same circuit after squeeze2 unification, so this table reports the unified replay-circuit path.}\label{tab:bench-groth16}
\begin{tabular}{@{}lr@{}}
\toprule
\textbf{Operation} & \textbf{Unified path} \\
\midrule
Prove (per transaction)  & 37.3\,ms \\
Verify (per transaction) & 1.6\,ms \\
Proof size               & 128\,B \\
\bottomrule
\end{tabular}
\end{table}

\paragraph{Batch STARK proving.}
Table~\ref{tab:bench-stark-batch} reports batch proving and verification timings for the Circle STARK backend across batch sizes 1--16. Batch prove time scales near-linearly with $N$ because trace generation dominates and is $O(N)$; the FRI commitment layer adds a $O(\log N)$ overhead that is small in practice. Batch verification provides moderate per-transaction amortization in the reference tiled-trace prototype, but verification time still grows with batch size. In the current measurements, batch verification rises from 3.18\,ms at $N=1$ to 37.97\,ms at $N=16$, or about 2.37\,ms per transaction. Note that single-transaction non-batch verification (Table~\ref{tab:bench-stark}, 1.1\,ms) is faster than batch $N=1$ verify (3.18\,ms) because the batch code path embeds per-transaction public input values into preprocessed columns that must be recomputed on each verify call.

\begin{table}[t]
\centering
\caption{Circle STARK batch prove/verify timings and proof sizes (single-threaded, Apple Silicon, Criterion.rs medians, zk-ace v0.4.0). Proof size is the serialized ACBP batch proof; amortized size divides by $N$.}\label{tab:bench-stark-batch}
\begin{tabular}{@{}rrrrrrrr@{}}
\toprule
\textbf{N} & \textbf{log} & \textbf{Prove} & \textbf{Prove/tx} & \textbf{Verify} & \textbf{Verify/tx} & \textbf{Proof} & \textbf{Proof/tx} \\
\midrule
 1 &  9 & 14.29\,ms & 14.29\,ms & 3.18\,ms & 3.18\,ms & 122\,KB & 122\,KB \\
 2 & 10 & 28.89\,ms & 14.45\,ms & 5.27\,ms & 2.64\,ms & 139\,KB &  69\,KB \\
 4 & 11 & 57.84\,ms & 14.46\,ms & 9.83\,ms & 2.46\,ms & 156\,KB &  39\,KB \\
 8 & 12 & 126.10\,ms & 15.76\,ms & 19.90\,ms & 2.49\,ms & 185\,KB &  23\,KB \\
16 & 13 & 236.39\,ms & 14.77\,ms & 37.97\,ms & 2.37\,ms & 206\,KB &  13\,KB \\
\bottomrule
\end{tabular}
\end{table}

\paragraph{Backend comparison.}
Table~\ref{tab:backend-comparison} summarizes the key differences between the two backends.

\begin{table}[t]
\centering
\caption{Backend comparison for \ZKACE reference implementation.}\label{tab:backend-comparison}
\begin{tabular}{@{}lcc@{}}
\toprule
\textbf{Aspect} & \textbf{Circle STARK (Stwo)} & \textbf{Groth16/BN254} \\
\midrule
Field              & Mersenne-31 (M31)   & BN254 Fr (${\approx}\,$254-bit) \\
Hash               & Poseidon2 ($t\!=\!16$, $\alpha\!=\!5$) & Poseidon ($t\!=\!3$, $\alpha\!=\!5$) \\
Proof system       & FRI + AIR           & R1CS + pairing \\
Constraints        & 341 active rows / 361 AIR expr. & 2,155 R1CS \\
Prove time         & \textbf{14.5\,ms}   & 37.3\,ms \\
Verify time        & \textbf{1.1\,ms}    & 1.6\,ms \\
Proof size         & ${\approx}\,$107\,KB & \textbf{128\,B} \\
PQ-oriented        & \textbf{Conditional} & No \\
Setup              & \textbf{Transparent} & Trusted (MPC) \\
\bottomrule
\end{tabular}
\end{table}

Verification latencies of 1.1--1.6\,ms are well within budget for both L1 and rollup settings. For comparison, a secp256k1 ECDSA verification on the same hardware takes approximately 50--100\,$\mu$s; the \ZKACE overhead is roughly 10--15$\times$ that of a classical signature check, but replaces what would otherwise be a multi-kilobyte post-quantum signature.

\paragraph{Proof size and on-chain model.}
The Groth16 backend produces 128-byte compressed proofs (three point-compressed BN254 group elements), yielding 288\,B total authorization data per transaction (128\,B proof + 160\,B public inputs).

The STARK backend produces individual proofs of ${\approx}\,$107\,KB, which are too large for per-transaction on-chain submission. Under the mandatory aggregation architecture (Section~\ref{sec:extensions:batch}), individual STARK proofs are verified off-chain by the block builder, who produces a single aggregate proof per block in deployments that add recursive aggregation. Per-transaction consensus-visible data consists of the five public input fields---approximately \textbf{160~bytes}---plus an amortized share of the aggregated proof submitted once per block. Under a builder-off-path model, aggregate proof verification is not performed by chain consensus, so the per-transaction consensus footprint remains ${\approx}\,$160\,bytes; the aggregation cost falls entirely on the block builder.

\paragraph{Compression ratios versus post-quantum signatures.}
Table~\ref{tab:compression} reports the per-transaction authorization data reduction under both on-chain models.

\begin{table}[t]
\centering
\caption{Authorization data per transaction: PQC signature vs.\ \ZKACE. STARK column assumes mandatory per-block aggregation (${\approx}\,$160\,B public inputs only). Groth16 column uses compressed proof (288\,B total).}\label{tab:compression}
\begin{tabular}{@{}lccccc@{}}
\toprule
\textbf{Scheme} & \textbf{Sig + PK} & \textbf{STARK} & \textbf{Ratio} & \textbf{Groth16} & \textbf{Ratio} \\
\midrule
ML-DSA-44~(L2)    & 3,732\,B & 160\,B & 23.3$\times$ & 288\,B & 13.0$\times$ \\
ML-DSA-65~(L3)    & 5,261\,B & 160\,B & 32.9$\times$ & 288\,B & 18.3$\times$ \\
ML-DSA-87~(L5)    & 7,219\,B & 160\,B & 45.1$\times$ & 288\,B & 25.1$\times$ \\
SLH-DSA-128f      & 17,120\,B & 160\,B & 107$\times$ & 288\,B & 59.4$\times$ \\
FN-DSA-512        & 1,563\,B & 160\,B & 9.8$\times$ & 288\,B & 5.4$\times$ \\
\bottomrule
\end{tabular}
\end{table}

Under the STARK aggregation model, \ZKACE achieves a \textbf{23--107$\times$} reduction in per-transaction on-chain authorization data relative to direct post-quantum signatures, with transparent setup and post-quantum-oriented security conditional on the instantiated parameters. The Groth16 backend provides a \textbf{5--59$\times$} reduction with minimal proof size but without post-quantum guarantees.

\subsection{Comparative Constraint Analysis}
\label{sec:implementation:comparative}

Table~\ref{tab:constraint-comparison} compares the circuit complexity of \ZKACE against the estimated constraint counts required for in-circuit verification of post-quantum signatures, providing a quantitative basis for the reported 500--2,300$\times$ constraint reduction.

\begin{table}[t]
\centering
\caption{Circuit constraint comparison: \ZKACE identity-bound authorization vs.\ in-circuit post-quantum signature verification. The ML-DSA and FN-DSA estimates are structural lower bounds based on the algebraic operations required for verification (NTTs, non-native modular arithmetic, range checks); actual implementations may exceed these bounds due to range-check gadgets, carry propagation, and conditional logic.}\label{tab:constraint-comparison}
\begin{tabular}{@{}lrl@{}}
\toprule
\textbf{Circuit} & \textbf{Constraints} & \textbf{Dominant cost} \\
\midrule
\ZKACE STARK (this work)    & 341 active rows / 361 AIR expr. & 11 Poseidon2 permutations \\
\ZKACE Groth16 (this work)  & 2,155 R1CS & 9 Poseidon sponge permutations \\
\midrule
In-circuit ML-DSA-44 verify & ${\gtrsim}\,$2\,M$^{\dagger}$ & $4\!\times\!4$ NTTs + mod.\ arith. \\
In-circuit ML-DSA-65 verify & ${\gtrsim}\,$4\,M$^{\dagger}$ & $6\!\times\!5$ NTTs + mod.\ arith. \\
In-circuit FN-DSA-512 verify & ${\gtrsim}\,$1\,M$^{\dagger}$ & FFT + Gram--Schmidt + mod.\ arith. \\
In-circuit SLH-DSA verify  & ${\gtrsim}\,$5\,M$^{\dagger\dagger}$ & WOTS$^+$ chains + Merkle trees \\
\midrule
In-circuit ECDSA verify~\cite{circom-ecdsa} & ${\approx}\,$1.5\,M & scalar mul.\ + mod.\ inverse \\
\bottomrule
\multicolumn{3}{@{}l@{}}{\footnotesize $^{\dagger}$\,Lower bound: $k \times \ell$ degree-256 NTTs in $\mathbb{Z}_q$ ($q = 2^{23}-2^{13}+1$) emulated over a ${\sim}\,$254-bit field.} \\
\multicolumn{3}{@{}l@{}}{\footnotesize $^{\dagger\dagger}$\,Dominated by hash-chain evaluations for WOTS$^+$ one-time signatures and Merkle authentication paths.}
\end{tabular}
\end{table}

The \ZKACE circuit achieves its low constraint count by a fundamentally different strategy: rather than embedding signature verification logic (which requires non-native field emulation, NTTs, or hash chains), it proves a semantic authorization property via a small number of ZK-friendly hash evaluations. This \textbf{roughly 500--2,300$\times$} constraint reduction is the primary quantitative result supporting the practical viability of identity-bound authorization as an alternative to ZK-compressed signature verification.

\paragraph{End-to-end authorization cost.}
Table~\ref{tab:end-to-end} summarizes the end-to-end cost profile for a single \ZKACE authorization under each backend.

\begin{table}[t]
\centering
\caption{End-to-end single-authorization cost profile for \ZKACE.}\label{tab:end-to-end}
\begin{tabular}{@{}lcc@{}}
\toprule
\textbf{Metric} & \textbf{STARK (Stwo)} & \textbf{Groth16/BN254} \\
\midrule
Constraints          & 341 active rows / 361 AIR expr. & 2,155 R1CS \\
Prove time           & 14.5\,ms            & 37.3\,ms \\
Verify time          & 1.1\,ms             & 1.6\,ms \\
Proof size           & ${\approx}\,$107\,KB & 128\,B \\
Per-TX on-chain data & ${\approx}\,$160\,B (aggregated) & 288\,B \\
PQ-oriented          & Conditional          & No \\
Setup                & Transparent         & Trusted \\
\bottomrule
\end{tabular}
\end{table}

\paragraph{Attestation and pipeline operations.}
Table~\ref{tab:bench-pipeline} reports timings for the identity-layer primitives and the consensus-critical pipeline stages, measured on the ACE runtime using \texttt{cargo bench -p ace-runtime -{}-features test-utils -{}-bench pipeline\_bench}. \textbf{Note:} Phase~2 (proof generation) uses a deterministic mock prover that simulates per-transaction SHA-256 hashing and tree aggregation; it does \emph{not} invoke real Groth16 proving. The mock-prover timings reflect the overhead of the proving \emph{orchestration}---witness marshalling, tree construction, and data serialization---exclusive of the ZK proof computation itself, which is benchmarked separately in Table~\ref{tab:bench-groth16}.

\begin{table}[t]
\centering
\caption{ACE runtime pipeline benchmarks (single-threaded, Apple M3~Pro). Phase~2 uses a mock prover; real Groth16 timings are in Table~\ref{tab:bench-groth16}.}\label{tab:bench-pipeline}
\begin{tabular}{@{}lrr@{}}
\toprule
\textbf{Operation} & \textbf{Batch size} & \textbf{Median time} \\
\midrule
HKDF key derivation (single)     & 1     & 1.16\,$\mu$s \\
Identity commitment ($\IDcom$)   & 1     & 270\,ns \\
Attestation generate              & 1     & 2.25\,$\mu$s \\
Attestation verify                & 1     & 0.82\,$\mu$s \\
\midrule
Phase~1a: AttestCheck             & 2{,}000 & 247\,$\mu$s ~~(124\,ns/tx) \\
Phase~1b: Execute                 & 2{,}000 & 134\,$\mu$s ~~(67\,ns/tx) \\
State root (Merkle)               & 2{,}000 & 537\,$\mu$s \\
Block build                       & 2{,}000 & 1.97\,ms \\
Phase~2: mock prove$^{\dagger}$   & 2{,}000 & 4.16\,ms \\
End-to-end pipeline$^{\dagger}$   & 2{,}000 & 7.56\,ms \\
\midrule
REV Argon2id seal/unseal          & 1     & 3.5\,ms \\
\bottomrule
\multicolumn{3}{@{}l@{}}{\footnotesize $^{\dagger}$Mock prover: orchestration overhead only, not real ZK proof generation.}
\end{tabular}
\end{table}

At batch size 2{,}000, the consensus-critical Attest--Execute path (Phase~1a + 1b + Merkle root) completes in ${\approx}\,$0.92\,ms, well under 1\% of a 400\,ms slot window. The attestation-check phase processes each transaction in ${\approx}\,$124\,ns, consistent with the sub-microsecond HMAC-SHA256 verification expected from the design. The end-to-end pipeline (with mock proving) takes ${\approx}\,$7.56\,ms; replacing the mock prover with real ZK proving would add ${\approx}\,$14.5\,ms (STARK) or ${\approx}\,$37.3\,ms (Groth16) per transaction of prover work, which is parallelizable and off the critical consensus path (Section~\ref{sec:extensions:batch}).

\section{Limitations and Future Work}
\label{sec:limitations}

\begin{itemize}[nosep,leftmargin=1.5em]
  \item \textbf{\DIDP dependency.} The security of \ZKACE is conditional on the assumed security properties of the underlying \DIDP (Section~\ref{sec:prelim:didp-security}). These properties---determinism, context isolation, and identity-root recovery hardness---follow from standard assumptions (collision resistance and high min-entropy of $\REV$) when the derivation function is modeled as a random oracle. ACE-GF~\cite{wang-acegf-2026} is one concrete instantiation; independent analysis of any candidate \DIDP would further strengthen the foundation of this work.
  \item \textbf{Nullifier-set state growth.} In the nullifier model, the global nullifier set grows monotonically with the number of authorization events. Long-term state management strategies (e.g., epoch-based pruning, Merkle-tree compaction) are an open engineering problem.
  \item \textbf{Single-platform benchmarks.} The benchmarks in Section~\ref{sec:implementation} are collected on Apple Silicon (single-threaded CPU). Gas-precise, chain-specific cost estimates and GPU-accelerated proving times remain to be measured. Cross-platform validation (e.g., x86, ARM server, WASM) and formal circuit audits are deferred to follow-up work.
  \item \textbf{STARK proof size.} Individual STARK proofs (${\approx}\,$107\,KB) require mandatory per-block aggregation for direct on-chain use; recursive aggregation overhead on the block builder has not been benchmarked.
  \item \textbf{Batch verification model.} The current STARK batch implementation uses tiled traces rather than recursive aggregation. Verification time grows with batch size in the benchmarked prototype; constant-verification aggregation remains future work.
  \item \textbf{Groth16 trusted setup.} The Groth16 backend requires a per-circuit trusted setup. The reference implementation uses a deterministic seed; production deployments must use MPC ceremonies.
  \item \textbf{Identity revocation and rotation.} \ZKACE as specified does not include an on-chain mechanism for identity revocation or commitment rotation. Extending the framework with revocation semantics compatible with the identity-commitment model is a direction for future work.
  \item \textbf{Poseidon quantum security.} The post-quantum security of the Circle STARK backend rests on the assumption that Poseidon2 is collision-resistant against quantum adversaries. Quantum collision attacks are more aggressively degraded than preimage attacks: Grover's algorithm gives a $2^{n/2}$ quantum preimage bound, but the BHT algorithm~\cite{brassard-hoyer-tapp-1997} achieves collision finding in ${\approx}\,2^{n/3}$ quantum queries. Consequently, a 248-bit capacity Poseidon2 over M31 provides only ${\approx}\,83$ bits of quantum collision security under BHT, not 124 bits as a naive halving of the classical bound would suggest. A formal quantum security analysis of Poseidon2 over Mersenne-31 has not been established in the literature. Deployments requiring provable post-quantum collision resistance at a target security level $\kappa$ should ensure capacity $\geq 3\kappa$ bits, and should treat the concrete instantiation as an open assumption pending dedicated cryptanalysis.
  \item \textbf{Privacy-throughput trade-off.} The choice between nonce and nullifier models involves a fundamental trade-off between ordering transparency and privacy. Hybrid models combining both strategies for different authorization classes merit further investigation.
\end{itemize}

\section{Conclusion}
\label{sec:conclusion}

\ZKACE reframes post-quantum-ready blockchain authorization as an identity-bound proof problem rather than a signature-object verification problem. By replacing transaction-carried post-quantum signatures with zero-knowledge proofs of deterministic authorization, the resulting model preserves verifier-side semantic guarantees while eliminating consensus dependence on kilobyte-scale signature artifacts.

The construction achieves this through a minimal set of components: an identity commitment anchored on-chain, a circuit that enforces commitment consistency, target binding, transaction authorization, replay prevention, and domain separation, and a verification interface compatible with account-abstraction and rollup architectures. The formal security analysis demonstrates that authorization soundness, replay resistance, substitution resistance, and cross-domain separation all reduce to standard cryptographic assumptions.

The structural data accounting shows an order-of-magnitude reduction in per-transaction authorization data relative to direct post-quantum signature deployment---achieved not by compressing signatures, but by eliminating the need for signature objects in the consensus path entirely. A reference implementation with two pluggable backends confirms circuit sizes of 341 active rows / 361 AIR constraint expressions (Circle STARK) and 2,155 R1CS constraints (Groth16)---roughly 500--2,300$\times$ smaller than in-circuit post-quantum signature verification. The STARK backend achieves 14.5\,ms proving and 1.1\,ms verification with transparent setup and post-quantum-oriented assumptions; under deployment models with mandatory per-block aggregation, per-transaction consensus-visible data is ${\approx}\,$160 bytes (public inputs) plus amortized aggregation proof overhead per block. The Groth16 backend provides 37.3\,ms proving and 128-byte proofs for EVM-native deployments. The two replay semantics share the same circuit after squeeze2 unification, so benchmarks report the unified replay-circuit path. With support for batch aggregation and a path toward recursive composition, \ZKACE provides a scalable authorization architecture for post-quantum blockchain systems.

\section*{Acknowledgments}

The author thanks Alberto La Rocca for insightful discussion on sponge-based circuit minimization. Alberto La Rocca independently identified that Poseidon's length-extension immunity makes HMAC-style double-hashing unnecessary in this setting, and confirmed a 2-call lower bound for the core MAC computation in his implementation. This observation directly motivated the 3-call optimization and the 2-call variant presented in Section~\ref{sec:circuit:optimization}.


\end{document}